\documentclass[11pt]{article}

\usepackage[margin=1in]{geometry}
\usepackage{color}
\usepackage{amssymb}
\usepackage{amsmath}
\usepackage{amsthm}
\usepackage{url}
\usepackage{framed}
\usepackage{enumitem}

\newtheorem{theorem}{Theorem}
\newtheorem{definition}[theorem]{Definition}
\newtheorem{proposition}[theorem]{Proposition}
\newtheorem{lemma}[theorem]{Lemma}

\newtheorem{problem}[theorem]{Problem}

\begin{document}

\title{Robust width: A characterization of uniformly stable and robust compressed sensing}

\author{Jameson Cahill\footnote{Department of Mathematics, Duke University, Durham, NC}\qquad Dustin G.\ Mixon\footnote{Department of Mathematics and Statistics, Air Force Institute of Technology, Wright-Patterson AFB, OH}}

\maketitle

\begin{abstract}
Compressed sensing seeks to invert an underdetermined linear system by exploiting additional knowledge of the true solution.
Over the last decade, several instances of compressed sensing have been studied for various applications, and for each instance, reconstruction guarantees are available provided the sensing operator satisfies certain sufficient conditions.
In this paper, we completely characterize the sensing operators which allow uniformly stable and robust reconstruction by convex optimization for many of these instances.
The characterized sensing operators satisfy a new property we call the \textit{robust width property}, which simultaneously captures notions of \textit{widths} from approximation theory and of \textit{restricted eigenvalues} from statistical regression.
We provide a geometric interpretation of this property, we discuss its relationship with the restricted isometry property, and we apply techniques from geometric functional analysis to find random matrices which satisfy the property with high probability.
\end{abstract}

\section{Introduction}

Let $x^\natural$ be some unknown member of a finite-dimensional Hilbert space $\mathcal{H}$, and let $\Phi\colon\mathcal{H}\rightarrow\mathbb{F}^M$ denote some known linear operator, where $\mathbb{F}$ is either $\mathbb{R}$ or $\mathbb{C}$.
In a general form, \textit{compressed sensing} concerns the task of estimating $x^\natural$ provided
\begin{itemize}
\item[(i)]
we are told that $x^\natural$ is close in some sense to a particular subset $\mathcal{A}\subseteq\mathcal{H}$, and
\item[(ii)]
we are given data $y=\Phi x^\natural+e$ for some unknown $e\in\mathbb{F}^M$ with $\|e\|_2\leq\epsilon$.
\end{itemize}
Intuitively, if the subset $\mathcal{A}$ is ``small,'' then (i) offers more information about $x^\natural$, and so we might allow $M$ to be small, accordingly; we are chiefly interested in cases where $M$ can be smaller than the dimension of $\mathcal{H}$, as suggested by the name ``compressed sensing.''
A large body of work over the last decade has shown that for several natural choices of $\mathcal{A}$, there is a correspondingly natural choice of norm $\|\cdot\|_\sharp$ over $\mathcal{H}$ such that
\[
\Delta_{\sharp,\Phi,\epsilon}(y)
\quad
:=
\quad
\arg\min
\quad
\|x\|_\sharp
\quad
\mbox{subject to}
\quad
\|\Phi x-y\|_2\leq\epsilon
\]
is an impressively good estimate of $x^\natural$, provided the sensing operator $\Phi$ satisfies certain properties. 
However, the known sufficient conditions on $\Phi$ are not known to be necessary.

In this paper, we consider a broad class of triples $(\mathcal{H},\mathcal{A},\|\cdot\|_\sharp)$, and for each member of this class, we completely characterize the sensing operators $\Phi$ for which
\begin{equation}
\label{eq.guarantee}
\|\Delta_{\sharp,\Phi,\epsilon}(\Phi x^\natural+e)-x^\natural\|_2
\leq C_0\|x^\natural-a\|_\sharp+C_1\epsilon
\qquad
\forall x^\natural\in\mathcal{H},e\in\mathbb{F}^M,\|e\|_2\leq\epsilon,a\in \mathcal{A}
\end{equation}
for any given $C_0$ and $C_1$.
In the left-hand side above, we use $\|\cdot\|_2$ to denote the norm induced by the inner product over $\mathcal{H}$.
A comment on terminology:
Notice that the above guarantee is \textit{uniform} over all $x^\natural\in\mathcal{H}$.
Also, the $\|x^\natural-a\|_\sharp$ term ensures \textit{stability} in the sense that we allow $x^\natural$ to deviate from the signal model $\mathcal{A}$, whereas the $\epsilon$ term ensures \textit{robustness} in the sense that we allow for noise in the sensing process.

The next section describes how our main result fits with the current compressed sensing literature in the traditional sparsity case.
Next, Section~3 gives the main result: that \eqref{eq.guarantee} is equivalent to a new property we call the \textit{robust width property (RWP)}.
This guarantee holds for a variety of instances of compressed sensing, specifically, whenever the triple $(\mathcal{H},\mathcal{A},\|\cdot\|_\sharp)$ forms something we call a \textit{CS space}.
We identify several examples of CS spaces in Section~4 to help illustrate the extent of the generality.
In the special case where $\Phi\Phi^*=I$, the matrix $\Phi$ satisfies RWP precisely when a sizable neighborhood of its null space in the Grassmannian is contained in the set of null spaces of matrices which satisfy a natural generalization of the \textit{width property} in~\cite{KashinT:07}; we make this equivalence rigorous in Section~5.
In Section~6, we provide a direct proof that the restricted isometry property (RIP) implies RWP in the traditional sparsity case.
Section~7 then applies techniques from geometric functional analysis to show (without appealing to RIP) that certain random matrices satisfy RWP with high probability.
In fact, taking inspiration from~\cite{RaskuttiWY:10}, we produce a sensing matrix that satisfies RWP for the traditional sparsity case, but does not satisfy RIP, thereby proving that RIP is strictly stronger than \eqref{eq.rip guarantee}.
We conclude in Section~8 with some remarks.

\section{Perspective: The traditional case of sparsity}

Since the release of the seminal papers in compressed sensing~\cite{CandesRT:06,CandesT:05,Donoho:06}, the community has traditionally focused on the case in which $\mathcal{H}=\mathbb{R}^N$, $\mathcal{A}$ is the set $\Sigma_K$ of vectors with at most $K$ nonzero entries (hereafter, \textit{$K$-sparse vectors}), and $\|\cdot\|_\sharp$ is taken to be $\|\cdot\|_1$, defined by
\[
\|x\|_1:=\sum_{i=1}^N|x_i|.
\]
In this case, the \textit{null space property (NSP)} characterizes when $\Delta_{1,\Phi,0}(\Phi x^\natural)=x^\natural$ for every $K$-sparse $x^\natural$ (see Theorem~4.5 in~\cite{FoucartR:13}, for example).
Let $x_K$ denote the \textit{best $K$-term approximation of $x$}, gotten by setting all entries of $x$ to zero, save the $K$ of largest magnitude.
Then NSP states that
\begin{equation*}
\tag{$K$-NSP}
\|x\|_1<2\|x-x_K\|_1
\qquad
\forall x\in\operatorname{ker}(\Phi)\setminus\{0\}.
\end{equation*}
One may pursue some notion of stability by strengthening NSP.
Indeed,
\begin{equation*}
\tag{$(K,c)$-NSP}
\|x\|_1\leq c\|x-x_K\|_1
\qquad
\forall x\in\operatorname{ker}(\Phi)
\end{equation*}
is equivalent to
\begin{equation}
\label{eq.L1 stability}
\|\Delta_{1,\Phi,0}(\Phi x^\natural)-x^\natural\|_1
\leq\frac{2c}{2-c}\|x^\natural-x^\natural_K\|_1
\qquad
\forall x^\natural\in\mathbb{R}^N
\end{equation}
whenever $1<c<2$ (see Theorem 4.12 in~\cite{FoucartR:13}, for example).
Observe that $\|x^\natural-x^\natural_K\|_1\leq\|x^\natural-a\|_1$ for every $a\in\Sigma_K$, and so this is similar to \eqref{eq.guarantee} in the case where $\epsilon=0$.
However, we prefer a more isotropic notion of stability, and so we seek error bounds in $\ell_2$.
Of course, the estimate $\|u\|_2\leq\|u\|_1$ converts \eqref{eq.L1 stability} into such an error bound.
Still, we should expect to do better (by a factor of $\sqrt{K}$), as suggested by Theorem~8.1 in~\cite{CohenDD:09}, which gives that having
\begin{equation*}
\tag{$(2K,c)$-NSP$_{2,1}$}
\|x\|_2\leq \frac{c}{\sqrt{2K}}\|x-x_{2K}\|_1
\qquad
\forall x\in\operatorname{ker}(\Phi)
\end{equation*}
for some $c>0$ is equivalent to the existence a decoder $\Delta\colon\mathbb{R}^M\rightarrow\mathbb{R}^N$ such that
\begin{equation}
\label{eq.decoder guarantee}
\|\Delta(\Phi x^\natural)-x^\natural\|_2\leq\frac{C}{\sqrt{K}}\|x^\natural-x^\natural_K\|_1
\qquad
\forall x^\natural\in\mathbb{R}^N
\end{equation}
for some $C>0$.
(Observe that the $C$ in \eqref{eq.decoder guarantee} differs from the $C_0$ in \eqref{eq.guarantee} by a factor of $\sqrt{K}$; as one might expect, the $C_0$ in \eqref{eq.guarantee} depends on $\mathcal{A}$ in general.)
Unfortunately, the decoder $\Delta$ that~\cite{CohenDD:09} constructs for the equivalence is computationally inefficient.
Luckily, Kashin and Temlyakov~\cite{KashinT:07} remark that \eqref{eq.decoder guarantee} holds for $\Delta=\Delta_{1,\Phi,0}$ if and only if 
$\Phi$ satisfies the \textit{width property}:
\begin{equation*}
\tag{$(K,c)$-WP}
\|x\|_2\leq\frac{c}{\sqrt{K}}\|x\|_1
\qquad
\forall x\in\operatorname{ker}(\Phi)
\end{equation*}
for some $c>0$.
The name here comes from the fact that $c/\sqrt{K}$ gives the maximum radius (i.e., ``width'') of the portion of the unit $\ell_1$ ball that intersects $\operatorname{ker}(\Phi)$.
We note that while $(K,c)$-WP is clearly implied by $(2K,c)$-NSP$_{2,1}$ (and may appear to be a strictly weaker assumption), it is a straightforward exercise to verify that the two are in fact equivalent up to constants.
The moral here is that a sensing matrix $\Phi$ has a uniformly stable decoder $\Delta$ precisely when $\Delta_{1,\Phi,0}$ is one such decoder, thereby establishing that $\ell_1$ minimization is particularly natural for the traditional compressed sensing problem.

The robustness of $\ell_1$ minimization is far less understood.
Perhaps the most popular sufficient condition for robustness is the \textit{restricted isometry property}:
\begin{equation*}
\tag{$(2K,\delta)$-RIP}
(1-\delta)\|x\|_2^2
\leq\|\Phi x\|_2^2
\leq(1+\delta)\|x\|_2^2
\qquad
\forall x\in\Sigma_{2K}.
\end{equation*}
As the name suggests, an RIP matrix $\Phi$ acts as a near-isometry on $2K$-sparse vectors, and if $\Phi$ satisfies $(2K,\delta)$-RIP, then
\begin{equation}
\label{eq.rip guarantee}
\|\Delta_{1,\Phi,\epsilon}(\Phi x^\natural+e)-x^\natural\|_2\leq\frac{C_0}{\sqrt{K}}\|x^\natural-x^\natural_K\|_1+C_1\epsilon
\qquad
\forall x^\natural\in\mathbb{R}^N,e\in\mathbb{R}^M,\|e\|_2\leq\epsilon,
\end{equation}
where, for example, we may take $C_0=4.2$ and $C_1=8.5$ when $\delta=0.2$ (Theorem~1.2 in~\cite{Candes:08}).
As such, in the traditional sparsity case, RIP implies the condition \eqref{eq.guarantee} we wish to characterize.
Note that the lower RIP bound $\|\Phi x\|_2^2\geq(1-\delta)\|x\|_2^2$ is important since otherwise we might fail to distinguish a certain pair of $K$-sparse vectors given enough noise.
On the other hand, the upper RIP bound $\|\Phi x\|_2^2\leq(1+\delta)\|x\|_2^2$ is not intuitively necessary for \eqref{eq.rip guarantee}.

In pursuit of a characterizing property, one may be inclined to instead seek an NSP-type sufficient condition for \eqref{eq.rip guarantee}.
To this end, Theorem~4.22 in~\cite{FoucartR:13} gives that \eqref{eq.rip guarantee} is implied by the following robust version of the null space property:
\begin{equation*}
\tag{$(K,c_0,c_1)$-RNSP$_{2,1}$}
\|x_K\|_2\leq\frac{c_0}{\sqrt{K}}\|x-x_K\|_1+c_1\|\Phi x\|_2
\qquad
\forall x\in\mathbb{R}^N
\]
provided $0<c_0<1$ and $c_1>0$; in \eqref{eq.rip guarantee}, one may take $C_0=2(1+c_0)^2/(1-c_0)$ and $C_1=(3+c_0)c_1/(1-c_0)$.
This property appears to be a fusion of sorts between NSP and the lower RIP bound, which seems promising, but a proof of necessity remains elusive.

As a special case of the main result in this paper, we characterize the sensing matrices $\Phi$ which satisfy \eqref{eq.rip guarantee} as those which satisfy a new property called the \textit{robust width property}:
\begin{equation}
\tag{$(K,c_0,c_1)$-RWP}
\|x\|_2\leq \frac{c_0}{\sqrt{K}}\|x\|_1
\qquad
\forall x\in\mathbb{R}^N\mbox{ such that }\|\Phi x\|_2<c_1\|x\|_2.
\end{equation}
In particular, $C_0$ and $C_1$ scale like $c_0$ and $1/c_1$, respectively, in both directions of the equivalence (and with reasonable constants).
We note that $(K,c_0)$-WP follows directly from $(K,c_0,c_1)$-RWP.
Also, the contrapositive statement gives that $\|\Phi x\|_2\geq c_1\|x\|_2$ whenever $\|x\|_2>(c_0/\sqrt{K})\|x\|_1$, which not only captures the lower RIP bound we find appealing, but also bears some resemblance to the \textit{restricted eigenvalue property} that Bickel, Ritov and Tsybakov~\cite{BickelRT:09} use to produce guarantees for Lasso~\cite{Tibshirani:96} and the Dantzig selector~\cite{CandesT:07}:
\begin{equation}
\tag{$(K,c_0,c_1)$-REP}
\|\Phi x\|_2\geq c_0\|x\|_2
\qquad
\forall x\in\mathbb{R}^N\mbox{ such that }\|x-x_K\|_1\leq c_1\|x_K\|_1.
\end{equation}
Indeed, both RWP and REP impose a lower RIP--type bound on all nearly sparse vectors.
The main distinction between RWP and REP is the manner in which ``nearly sparse'' is technically defined.

\section{Main result}

In this section, we present a characterization of stable and robust compressed sensing.
This result can be applied to various instances of compressed sensing, and in order to express these instances simultaneously, it is convenient to make the following definition:

\begin{definition}
A \textit{CS space $(\mathcal{H},\mathcal{A},\|\cdot\|_\sharp)$ with bound $L$} consists of a finite-dimensional Hilbert space $\mathcal{H}$, a subset $\mathcal{A}\subseteq\mathcal{H}$, and a norm $\|\cdot\|_\sharp$ on $\mathcal{H}$ with the following properties:
\begin{itemize}
\item[(i)]
$0\in\mathcal{A}$.
\item[(ii)]
For every $a\in\mathcal{A}$ and $z\in\mathcal{H}$, there exists a decomposition $z=z_1+z_2$ such that
\[
\|a+z_1\|_\sharp=\|a\|_\sharp+\|z_1\|_\sharp,
\qquad
\|z_2\|_\sharp\leq L\|z\|_2.
\]
\end{itemize}
\end{definition}

The first property above ensures that $\mathcal{A}$ is not degenerate, whereas the second property is similar to the notion of \textit{decomposability}, introduced by Negahban et al.~\cite{NegahbanRWY:12}.
Note that one may always take $z_1=0$ and $z_2=z$, leading to a trivial bound $L=\sup\{\|z\|_\sharp/\|z\|_2:z\in\mathcal{H}\setminus\{0\}\}$.
Written differently, this bound satisfies $L^{-1}=\min\{\|z\|_2:z\in\mathcal{H},\|z\|_\sharp=1\}$, which is the smallest width of the unit $\sharp$-ball $B_\sharp$ (this is an example of a \textit{Gelfand width}).
As our main result will show, any substantial improvement to this bound will allow for uniformly stable and robust compressed sensing.
Such improvement is possible provided one may always decompose $z=z_1+z_2$ so that $z_2$ either has small $\ell_2$ norm or satisfies $\|z_2\|_\sharp\ll\|z_2\|_2$, which is to say that $z_2/\|z_2\|_\sharp$ lies in proximity to a ``pointy'' portion of $B_\sharp$.
However, we can expect such choices for $z_2$ to be uncommon, and so the set of all $x$ satisfying
\begin{equation}
\label{eq.pythagoras}
\|a+x\|_\sharp=\|a\|_\sharp+\|x\|_\sharp
\end{equation}
should be large, accordingly.
For the sake of intuition, denote the \textit{descent cone of $\|\cdot\|_\sharp$ at $a$} by
\[
\mathcal{D}
:=\{y:\exists t>0\mbox{ such that }\|a+ty\|_\sharp\leq\|a\|_\sharp\}.
\]
In the appendix, we show that if $\|v\|_\sharp=\|a\|_\sharp$ and $v-a$ generates a bounding ray of $\mathcal{D}$, then every nonnegative scalar multiple $x$ of $v$ satisfies \eqref{eq.pythagoras}.
As such, the set of $x$ satisfying \eqref{eq.pythagoras} is particularly large, for example, when $B_\sharp$ is \textit{locally conic} at $a/\|a\|_\sharp$, that is, the neighborhood of $a/\|a\|_\sharp$ in $B_\sharp$ is identical to the neighborhood of $0$ in $\mathcal{D}$ (when translated by $a/\|a\|_\sharp$).
Note that $B_1$ is locally conic at any sparse vector, whereas $B_2$ is nowhere locally conic.

For the record, we make no claim that ours is the ultimate definition of a CS space; indeed, our main result might be true for a more extensive class of spaces, but we find this definition to be particularly broad.
We demonstrate this in the next section with a series of instances, each of which having received considerable attention in the literature.

Next, we formally define our characterizing property:

\begin{definition}
We say a linear operator $\Phi\colon\mathcal{H}\rightarrow\mathbb{F}^M$ satisfies the \textit{$(\rho,\alpha)$-robust width property (RWP) over $B_\sharp$} if
\[
\|x\|_2\leq\rho\|x\|_\sharp
\]
for every $x\in\mathcal{H}$ such that $\|\Phi x\|_2<\alpha\|x\|_2$.
\end{definition}

If $\Phi$ satisfies $(\rho,\alpha)$-RWP, then its null space necessarily intersects the unit $\sharp$-ball $B_\sharp$ with maximum radius $\leq\rho$.
The RWP gets its name from this geometric feature; ``robust'' comes from the fact that points in $B_\sharp$ which are sufficiently close to the null space exhibit the same concentration in $\ell_2$.
We further study this geometric meaning in Section~5.
In the meantime, we give the main result:

\begin{theorem}
\label{thm.width-recovery}
For any CS space $(\mathcal{H},\mathcal{A},\|\cdot\|_\sharp)$ with bound $L$ and any linear operator $\Phi\colon\mathcal{H}\rightarrow\mathbb{F}^M$, the following are equivalent up to constants:
\begin{itemize}
\item[(a)]
$\Phi$ satisfies the $(\rho,\alpha)$-robust width property over $B_\sharp$.
\item[(b)]
For every $x^\natural\in\mathcal{H}$, $\epsilon\geq0$ and $e\in\mathbb{F}^M$ with $\|e\|_2\leq\epsilon$, any solution $x^\star$ to
\[
\arg\min
\quad
\|x\|_\sharp
\quad
\mathrm{subject~to}
\quad
\|\Phi x-(\Phi x^\natural+e)\|_2\leq\epsilon
\]
satisfies $\|x^\star-x^\natural\|_2\leq C_0\|x^\natural-a\|_\sharp+C_1\epsilon$ for every $a\in\mathcal{A}$.
\end{itemize}
In particular, (a) implies (b) with 
\[
C_0=4\rho,
\qquad
C_1=\frac{2}{\alpha}
\]
provided $\rho\leq1/(4L)$.
Also, (b) implies (a) with 
\[
\rho=2C_0,
\qquad
\alpha=\frac{1}{2C_1}.
\]
\end{theorem}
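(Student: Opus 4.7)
The plan is to prove the two implications separately: (a)$\Rightarrow$(b) by a case split driven by the RWP dichotomy combined with the CS-space decomposability, and (b)$\Rightarrow$(a) by choosing a specific test instance that turns the recovery guarantee directly into the RWP bound.

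For (a)$\Rightarrow$(b), I would set $h := x^\star - x^\natural$ and aim to bound $\|h\|_2$. Since $\|e\|_2 \leq \epsilon$, the signal $x^\natural$ is feasible, so $\|x^\star\|_\sharp \leq \|x^\natural\|_\sharp$, and the triangle inequality gives $\|\Phi h\|_2 \leq 2\epsilon$. Now split on the RWP dichotomy: if $\|\Phi h\|_2 \geq \alpha\|h\|_2$, then $\|h\|_2 \leq 2\epsilon/\alpha$ immediately; otherwise $(\rho,\alpha)$-RWP gives $\|h\|_2 \leq \rho\|h\|_\sharp$, so it remains to control $\|h\|_\sharp$. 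Here I would invoke CS-space decomposability: given $a\in\mathcal{A}$, write $h = h_1 + h_2$ with $\|a+h_1\|_\sharp = \|a\|_\sharp + \|h_1\|_\sharp$ and $\|h_2\|_\sharp \leq L\|h\|_2$. Writing $x^\star = (a+h_1) + ((x^\natural - a) + h_2)$ and using the reverse triangle inequality yields $\|x^\star\|_\sharp \geq \|a\|_\sharp + \|h_1\|_\sharp - \|x^\natural - a\|_\sharp - \|h_2\|_\sharp$. Combining with $\|x^\star\|_\sharp \leq \|a\|_\sharp + \|x^\natural - a\|_\sharp$ (from optimality plus triangle inequality) gives $\|h_1\|_\sharp \leq 2\|x^\natural - a\|_\sharp + \|h_2\|_\sharp$, whence $\|h\|_\sharp \leq 2\|x^\natural - a\|_\sharp + 2L\|h\|_2$. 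Substituting back into $\|h\|_2 \leq \rho\|h\|_\sharp$ and using $\rho \leq 1/(4L)$ to absorb the $2\rho L\|h\|_2$ term into the left side produces $\|h\|_2 \leq 4\rho\|x^\natural - a\|_\sharp$, and the two cases combine to the claimed bound.

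For (b)$\Rightarrow$(a), I would exploit the flexibility in choosing $(x^\natural, e, \epsilon, a)$. Given any $x$ with $\|\Phi x\|_2 < \alpha\|x\|_2$ where $\alpha := 1/(2C_1)$, set $x^\natural := x$, $a := 0$ (permitted by property~(i) of a CS space), $e := -\Phi x$, and $\epsilon := \|\Phi x\|_2$. Then $y = \Phi x^\natural + e = 0$, so the unique minimizer of $\|x'\|_\sharp$ subject to $\|\Phi x'\|_2 \leq \epsilon$ is $x^\star = 0$. Applying (b) gives $\|x\|_2 \leq C_0\|x\|_\sharp + C_1\|\Phi x\|_2 < C_0\|x\|_\sharp + \tfrac{1}{2}\|x\|_2$, which rearranges to $\|x\|_2 \leq 2C_0\|x\|_\sharp$, exactly $(\rho,\alpha)$-RWP with $\rho = 2C_0$.

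The main obstacle is the bookkeeping in the (a)$\Rightarrow$(b) direction: one must simultaneously juggle optimality of $x^\star$, feasibility of $x^\natural$, the reverse triangle inequality, and the two-term decomposition guaranteed by the CS space, without losing control of the constants. The role of the hypothesis $\rho \leq 1/(4L)$ becomes transparent only at the very end, where it is precisely what is needed to absorb the $2\rho L\|h\|_2$ correction arising from $\|h_2\|_\sharp \leq L\|h\|_2$; a looser bound on $\rho$ would still yield recovery, but with worse constants or no absorption at all.
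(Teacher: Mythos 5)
Your proposal is correct and follows essentially the same argument as the paper: the same CS-space decomposition of $x^\star-x^\natural$ with the same triangle-inequality chain and absorption via $\rho\le 1/(4L)$ for (a)$\Rightarrow$(b), and the same $a=0$, $x^\star=0$ test instance for (b)$\Rightarrow$(a), yielding the identical constants $C_0=4\rho$, $C_1=2/\alpha$, $\rho=2C_0$, $\alpha=1/(2C_1)$. The only cosmetic differences are your explicit case split (versus the paper's ``assume $\|x^\star-x^\natural\|_2>C_1\epsilon$'') and your choice $e=-\Phi x$ in place of the paper's $e=0$ with $\epsilon=\alpha\|x^\natural\|_2$.
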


Notice that $C_0$ scales with $\rho$, while $C_1$ scales with $\alpha^{-1}$.
In the next section, we provide a variety of examples of CS spaces for which $L=O(\sqrt{K})$ for some parameter $K$, and since $C_0$ scales with $\rho=O(1/L)$, we might expect to find reconstruction guarantees for these spaces with $C_0=O(1/\sqrt{K})$; indeed, this has been demonstrated in~\cite{Candes:08,EldarM:09,MohanF:10,NeedellW:13b,NeedellW:13,RauhutW:14}.
Also, the fact that $C_1$ scales with $\alpha^{-1}$ is somewhat intuitive:
First, suppose that
\[
\|a\|_\sharp
\leq L\|a\|_2
\qquad
\forall a\in\mathcal{A}.
\]
(This occurs for every CS space considered in the next section.)
If $L<\rho^{-1}$, then the contrapositive of RWP gives that every point in $\mathcal{A}$ avoids the null space of $\Phi$.
The extent to which these points avoid the null space is captured by $\alpha$, and so we might expect more stability when $\alpha$ is larger (as is the case).

\begin{proof}[Proof of Theorem~\ref{thm.width-recovery}]
(b)$\Rightarrow$(a):
Pick $x^\natural$ such that $\|\Phi x^\natural\|_2<\alpha\|x^\natural\|_2$, and set $\epsilon=\alpha\|x^\natural\|_2$ and $e=0$.
Due to the feasibility of $x=0$, we may take $x^\star=0$, and so
\[
\|x^\natural\|_2
=\|x^\star-x^\natural\|_2
\leq C_0\|x^\natural\|_\sharp+C_1\epsilon
=C_0\|x^\natural\|_\sharp+\alpha C_1\|x^\natural\|_2,
\]
where the inequality applies (b) with $a=0$, which is allowed by property (i).
Isolating $\|x^\natural\|_2$ then gives
\[
\|x^\natural\|_2
\leq\frac{C_0}{1-\alpha C_1}\|x^\natural\|_\sharp
=\rho\|x^\natural\|_\sharp,
\]
where we take $\alpha=(2C_1)^{-1}$ and $\rho=2C_0$.

(a)$\Rightarrow$(b)
Pick $a\in\mathcal{A}$, and decompose $x^\star-x^\natural=z_1+z_2$ according to property (ii), i.e., so that $\|a+z_1\|_\sharp=\|a\|_\sharp+\|z_1\|_\sharp$ and $\|z_2\|_\sharp\leq L\|x^\star-x^\natural\|_2$.
Then
\begin{align*}
\|a\|_\sharp+\|x^\natural-a\|_\sharp
&\geq\|x^\natural\|_\sharp\\
&\geq\|x^\star\|_\sharp\\
&=\|x^\natural+(x^\star-x^\natural)\|_\sharp\\
&=\|a+(x^\natural-a)+z_1+z_2\|_\sharp\\
&\geq\|a+z_1\|_\sharp-\|(x^\natural-a)+z_2\|_\sharp\\
&\geq\|a+z_1\|_\sharp-\|x^\natural-a\|_\sharp-\|z_2\|_\sharp\\
&=\|a\|_\sharp+\|z_1\|_\sharp-\|x^\natural-a\|_\sharp-\|z_2\|_\sharp.
\end{align*}
Rearranging then gives $\|z_1\|_\sharp\leq2\|x^\natural-a\|_\sharp+\|z_2\|_\sharp$, which implies
\begin{equation}
\label{eq.general to bound 1}
\|x^\star-x^\natural\|_\sharp
\leq\|z_1\|_\sharp+\|z_2\|_\sharp
\leq2\|x^\natural-a\|_\sharp+2\|z_2\|_\sharp.
\end{equation}
Assume $\|x^\star-x^\natural\|_2>C_1\epsilon$, since otherwise we are done.
With this, we have
\[
\|\Phi x^\star-\Phi x^\natural\|_2
\leq\|\Phi x^\star-(\Phi x^\natural+e)\|_2+\|e\|_2
\leq2\epsilon
<2C_1^{-1}\|x^\star-x^\natural\|_2
=\alpha\|x^\star-x^\natural\|_2,
\]
where we take $C_1=2\alpha^{-1}$.
By (a), we then have 
\begin{equation}
\label{eq.applying a}
\|x^\star-x^\natural\|_2
\leq\rho\|x^\star-x^\natural\|_\sharp.
\end{equation}
Next, we appeal to a property of $z_2$:
\[
\|z_2\|_\sharp
\leq L\|x^\star-x^\natural\|_2
\leq \rho L\|x^\star-x^\natural\|_\sharp,
\]
where the last step follows from \eqref{eq.applying a}.
Substituting into \eqref{eq.general to bound 1} and rearranging then gives
\[
\|x^\star-x^\natural\|_\sharp
\leq\frac{2}{1-2\rho L}\|x^\natural-a\|_\sharp
\leq 4\|x^\natural-a\|_\sharp,
\]
provided $\rho\leq1/(4L)$.
Finally, we apply \eqref{eq.applying a} again to get
\[
\|x^\star-x^\natural\|_2
\leq \rho\|x^\star-x^\natural\|_\sharp
\leq 4\rho\|x^\natural-a\|_\sharp
=C_0\|x^\natural-a\|_\sharp
\leq C_0\|x^\natural-a\|_\sharp+C_1\epsilon,
\]
taking $C_0:=4\rho$.
\end{proof}

Notice that the above proof does not make use of every property of the norm $\|\cdot\|_\sharp$.
In particular, it suffices for $\|\cdot\|_\sharp\colon\mathcal{H}\rightarrow\mathbb{R}$ to satisfy
\begin{itemize}
\item[(i)]
$\|x\|_\sharp\geq\|0\|_\sharp$ for every $x\in\mathcal{H}$, and
\item[(ii)]
$\|x+y\|_\sharp\leq\|x\|_\sharp+\|y\|_\sharp$ for every $x,y\in\mathcal{H}$.
\end{itemize}
For example, one may take $\|\cdot\|_\sharp$ to be a \textit{seminorm}, i.e., a function which satisfies every norm property except positive definiteness, meaning $\|x\|_\sharp$ is allowed to be zero when $x$ is nonzero.
As another example, in the case where $\mathcal{H}=\mathbb{R}^N$, one may take 
\[
\|x\|_\sharp
=\|x\|_p^p
:=\sum_{i=1}^N|x_i|^p,
\qquad
0<p<1.
\]
This choice of objective function was first proposed by Chartrand~\cite{Chartrand:07}.
Since $B_\sharp\subseteq B_1$, then for any $\alpha$, we might expect $\Phi$ to satisfy $(\rho,\alpha)$-RWP over $B_\sharp$ with a smaller $\rho$ than for $B_1$.
As such, minimizing $\|x\|_\sharp$ instead of $\|x\|_1$ could very well yield more stability or robustness, though potentially at the price of computational efficiency since this alternative minimization is not a convex program.

\section{CS spaces}

In this section, we identify a variety of examples of CS spaces to illustrate the generality of our main result from the previous section.
For each example, we use the following lemma as a proof technique:

\begin{lemma}
Consider a finite-dimensional Hilbert space $\mathcal{H}$, subsets $\mathcal{A},\mathcal{B}\subseteq\mathcal{H}$, and a norm $\|\cdot\|_\sharp$ on $\mathcal{H}$ satisfying the following:
\begin{itemize}
\item[(i)]
$0\in\mathcal{A}$.
\item[(ii)]
For every $a\in\mathcal{A}$ and $z\in\mathcal{H}$, there exists a decomposition $z=z_1+z_2$ with $\langle z_1,z_2\rangle=0$ such that
\[
\|a+z_1\|_\sharp=\|a\|_\sharp+\|z_1\|_\sharp,
\qquad
z_2\in \mathcal{B}.
\]
\item[(iii)]
$\|b\|_\sharp\leq L\|b\|_2$ for every $b\in\mathcal{B}$.
\end{itemize}
Then $(\mathcal{H},\mathcal{A},\|\cdot\|_\sharp)$ is a CS space with bound $L$.
\end{lemma}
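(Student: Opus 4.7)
The plan is to verify the two defining properties of a CS space directly from the hypotheses, with the Pythagorean theorem serving as the only non-cosmetic step. Property (i) of the CS space definition is identical to hypothesis (i), so there is nothing to check there.

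For property (ii), I would fix arbitrary $a\in\mathcal{A}$ and $z\in\mathcal{H}$, and invoke hypothesis (ii) to produce a decomposition $z=z_1+z_2$ with $\langle z_1,z_2\rangle=0$, $\|a+z_1\|_\sharp=\|a\|_\sharp+\|z_1\|_\sharp$, and $z_2\in\mathcal{B}$. The first equality is already exactly what the CS space definition demands of $z_1$, so the only remaining task is to convert the given bound $z_2\in\mathcal{B}$ into the bound $\|z_2\|_\sharp\leq L\|z\|_2$.

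To do this, apply hypothesis (iii) to $z_2$ to get $\|z_2\|_\sharp\leq L\|z_2\|_2$. Then use orthogonality of $z_1$ and $z_2$ together with $z=z_1+z_2$ and the Pythagorean identity in $\mathcal{H}$ to write
\[
\|z\|_2^2=\|z_1\|_2^2+\|z_2\|_2^2\geq\|z_2\|_2^2,
\]
which yields $\|z_2\|_2\leq\|z\|_2$. Chaining the two inequalities gives $\|z_2\|_\sharp\leq L\|z\|_2$, which is precisely what the definition requires.

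There is no real obstacle here; the only content of the lemma is the observation that the orthogonality condition in hypothesis (ii) was designed exactly to make Pythagoras available, so that a $\sharp$-bound on $z_2$ controlled by its own $\ell_2$-norm (via membership in $\mathcal{B}$) upgrades for free to one controlled by the $\ell_2$-norm of all of $z$. The lemma therefore serves as a convenient interface: in applications one builds a witness set $\mathcal{B}$ of vectors whose $\sharp$-norm is uniformly comparable to their $\ell_2$-norm, and produces the CS decomposition by orthogonal projection, so no further calculation beyond (iii) and Pythagoras is ever needed.
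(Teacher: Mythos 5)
Your proposal is correct and follows the paper's own argument exactly: apply hypothesis (iii) to get $\|z_2\|_\sharp\leq L\|z_2\|_2$, then use orthogonality and the Pythagorean identity to bound $\|z_2\|_2\leq\|z\|_2$, which yields the CS space requirement $\|z_2\|_\sharp\leq L\|z\|_2$. Nothing further is needed.
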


In words, the lemma uses an auxiliary set $\mathcal{B}$ to orthogonally decompose any $z\in\mathcal{H}$.
The conclusion follows form the fact that, since $z_2\in\mathcal{B}$ and $z_2$ is orthogonal to $z_1$, we have
\[
\|z_2\|_\sharp
\leq L\|z_2\|_2
\leq L\sqrt{\|z_1\|_2^2+\|z_2\|_2^2}
=L\|z\|_2.
\]
The remainder of this section uses this lemma to verify several CS spaces, as summarized in Table~\ref{table.cs spaces}.

\begin{table}
\caption{Examples of CS spaces.}\label{table.cs spaces}
\begin{center}
\begin{footnotesize}
\bgroup
\def\arraystretch{1.5}
\begin{tabular}{lccccc}
\hline
Instance & $\mathcal{H}$ & $\mathcal{A}$ & $\|x\|_\sharp$ & $L$ & Location \\
\hline
Weighted sparsity & $\mathbb{R}^N$ or $\mathbb{C}^N$ & $\Sigma_{W,K}$ & $\|Wx\|_1$ & $\sqrt{K}$ & Sec.~\ref{sec.1}\\
Block sparsity & $\bigoplus_{j\in J}\mathcal{H}_j$ & $B^{-1}(\Sigma_K)$ & $\|B(x)\|_1$ & $\sqrt{K}$ & Sec.~\ref{sec.2}\\
Gradient sparsity & $\operatorname{ker}(\nabla)^\perp$ & $\nabla^{-1}(\Sigma_K)$ & $\|\nabla x\|_1$ & $2\Delta\sqrt{K}$ & Sec.~\ref{sec.3}\\
Low-rank matrices & $\mathbb{R}^{m\times n}$ or $\mathbb{C}^{m\times n}$ & $\sigma^{-1}(\Sigma_K)$ & $\|x\|_*$ & $\sqrt{2K}$ & Sec.~\ref{sec.4}\\
\hline
\end{tabular}
\egroup
\end{footnotesize}
\end{center}
\end{table}

\subsection{Weighted sparsity}
\label{sec.1}

Take $\mathcal{H}$ to be either $\mathbb{R}^N$ or $\mathbb{C}^N$.
Let $W$ be a diagonal $N\times N$ matrix of positive weights $\{w_i\}_{i=1}^N$, and take $\|x\|_\sharp=\|Wx\|_1$ for every $x\in\mathcal{H}$.
The \textit{weighted sparsity} of a vector $x$ is defined to be $S_W(x):=\sum_{i\in\operatorname{supp}(x)}w_i^2$, and we take $\mathcal{A}=\mathcal{B}=\Sigma_{W,K}:=\{x:S_W(x)\leq K\}$.

To verify property (ii), fix any vector $a\in\mathcal{A}$.
Then for any vector $z$, pick $z_2$ to be the restriction of $z$ to the support of $a$, i.e.,
\[
z_2[i]:=
\left\{\begin{array}{cl}z[i]&\mbox{if }i\in\operatorname{supp}(a)\\0&\mbox{otherwise.}\end{array}\right.
\]
Also, pick $z_1:=z-z_2$.
Then $z_1$ and $z_2$ have disjoint support, implying $\langle z_1,z_2\rangle=0$.
Next, since $Wa$ and $Wz_1$ have disjoint support, we also have $\|W(a+z_1)\|_1=\|Wa\|_1+\|Wz_1\|_1$.
Finally, $S_W(z_2)=S_W(a)\leq K$, meaning $z_2\in\mathcal{B}$.
Note that we can take the bound of this CS space to be $L=\sqrt{K}$ since for every $b\in\mathcal{B}=\Sigma_{W,K}$, Cauchy--Schwarz gives
\[
\|b\|_\sharp
=\sum_{i=1}^Nw_i|b_i|
=\sum_{i\in\operatorname{supp}(b)}w_i|b_i|
\leq\bigg(\sum_{i\in\operatorname{supp}(b)}w_i^2\bigg)^{1/2}\|b\|_2
\leq \sqrt{K}\|b\|_2.
\]

Weighted sparsity has been applied in a few interesting ways.
If one is given additional information about the support of the desired signal $x^\natural$, he might weight the entries in the optimization to his benefit.
Suppose one is told of a subset of indices $T$ such that the support of $x^\natural$ is guaranteed to overlap with at least 10\% of $T$ (say).
For example, if $x^\natural$ is the wavelet transform of a natural image, then the entries of $x$ which correspond to lower frequencies tend to be large, so these indices might be a good choice for $T$.
When minimizing the $\ell_1$ norm of $x$, if we give less weight to the entries of $x$ over $T$, then the weighted minimizer will exhibit less $\ell_2$ error, accordingly~\cite{FriedlanderMSY:12,YuB:13}.
As another example, suppose $x^\natural$ denotes coefficients in an orthonormal basis over some finite-dimensional subspace $F\subseteq L^2([0,1])$, and suppose $y=\Phi x^\natural+e$ are noisy samples of the corresponding function at $M$ different points in $[0,1]$.
If one wishes to interpolate these samples with a smooth function that happens to be a sparse combination of basis elements in $F$, then he can encourage smoothness by weighting the entries appropriately~\cite{RauhutW:14}.
Uniformly stable and robust compressed sensing with weighted sparsity was recently demonstrated by Rauhut and Ward~\cite{RauhutW:14}.

\subsection{Block sparsity}
\label{sec.2}

Let $\{\mathcal{H}_j\}_{j\in J}$ be a finite collection of Hilbert spaces, and take $\mathcal{H}:=\bigoplus_{j\in J}\mathcal{H}_j$.
For some applications, it is reasonable to model interesting signals as $x\in\mathcal{H}$ such that, for most indices $j$, the component $x_j$ of the signal in $\mathcal{H}_j$ is zero; such signals are called \textit{block sparse}.
Notationally, we take $B\colon\bigoplus_{j\in J}\mathcal{H}_j\rightarrow\ell(J)$ to be defined entrywise by
\[
(B(x))[j]=\|x_j\|_2,
\]
and then write $B^{-1}(\Sigma_K)$ as the set of $K$-block sparse vectors.
(Here, $\ell(J)$ denotes the set of real-valued functions over $J$.)
In this case, we set $\mathcal{A}=\mathcal{B}=B^{-1}(\Sigma_K)$ and $\|x\|_\sharp=\|B(x)\|_1$.
Property (ii) then follows by an argument which is analogous to the weighted sparsity case.
To find the bound of this CS space, pick $b\in\mathcal{B}$ and note that $B(b)$ is $K$-sparse.
Then
\[
\|b\|_\sharp
=\|B(b)\|_1
\leq\sqrt{K}\|B(b)\|_2
=\sqrt{K}\|b\|_2,
\]
and so we may take $L=\sqrt{K}$.

Block sparsity has been used to help estimate multi-band signals, measure gene expression levels, and perform various tasks in machine learning (see~\cite{ElhamifarV:12} and references therein).
Similar to minimizing $\|B(x)\|_1$, Bakin~\cite{Bakin:99} (and more recently, Yuan and Lin~\cite{YuanL:06}) proposed the \textit{group lasso} to facilitate model selection by partitioning factors into blocks for certain real-world instances of the multifactor analysis-of-variance problem.
Eldar and Mishali~\cite{EldarM:09} proved that minimizing $\|B(x)\|_1$ produces uniformly stable and robust estimates of block sparse signals, and the simulations in~\cite{EldarKB:10} demonstrate that this minimization outperforms standard $\ell_1$ minimization when $x^\natural$ is block sparse.

\subsection{Gradient sparsity}
\label{sec.3}

Take $G=(V,E)$ to be a directed graph, and let $\ell(V)$ and $\ell(E)$ denote the vector spaces of real-valued (or complex-valued) functions over $V$ and $E$, respectively.
Then the \textit{gradient} $\nabla\colon\ell(V)\rightarrow\ell(E)$ is the linear operator defined by
\[
(\nabla x)[(i,j)]=x(j)-x(i)
\qquad
\forall(i,j)\in E
\]
for each $x\in\ell(V)$.
Take $\mathcal{H}=\operatorname{ker}(\nabla)^\perp$ and consider the \textit{total variation norm} $\|x\|_\sharp=\|\nabla x\|_1$.
We will verify property (ii) for $\mathcal{A}=\nabla^{-1}(\Sigma_K)$ and $\mathcal{B}=\nabla^{-1}(\Sigma_{2K\Delta})$, where $\Delta$ denotes the maximum total degree of $G$.

To this end, pick any $a$ such that $\nabla a$ is $K$-sparse, and let $H$ denote the subgraph $(V,\operatorname{supp}(\nabla a))$.
Take $C$ to be the subspace of all $x\in\mathcal{H}$ such that $x[i]=x[j]$ whenever $i$ and $j$ lie in a common (weak) component of $H$.
Then given any $z\in\mathcal{H}$, decompose $z$ using orthogonal projections: $z_1=P_Cz$ and $z_2=P_{C^\perp}z$.
We immediately have $z=z_1+z_2$ and $\langle z_1,z_2\rangle=0$.
Next, since $z_1\in C$, we know that $(i,j)\in\operatorname{supp}(\nabla a)$ implies $z_1[i]=z_1[j]$, and so $(\nabla z_1)[(i,j)]=0$.
As such, $\nabla a$ and $\nabla z_1$ have disjoint supports, and so
\[
\|\nabla(a+z_1)\|_1=\|\nabla a\|_1+\|\nabla z_1\|_1.\
\]
Finally, $H$ has $K$ edges, and so $H$ has at least $N-2K$ isolated vertices.
For each isolated vertex $i$, we have $z_1[i]=z[i]$, implying $z_2[i]=0$.
As such, $z_2$ is at most $2K$-sparse.
Since $(\nabla z_2)[(i,j)]$ is nonzero only if $z_2[i]$ or $z_2[j]$ is nonzero, then $\nabla z_2$ is only supported on the edges which are incident to support of $z_2$.
The easiest upper bound on this number of edges is $2K\Delta$, and so $z_2\in\mathcal{B}$.

We claim that this CS space has bound $L=2\Delta\sqrt{K}$.
To see this, first note that $\nabla^*\nabla=D-A$, where $D$ is the diagonal matrix of vertex total degrees, and where $A$ is the adjacency matrix of the underlying undirected graph.
Then
\[
\|\nabla\|_2^2
=\|\nabla^*\nabla\|_2
=\|D-A\|_2
\leq\|D\|_2+\|A\|_2
\leq 2\Delta,
\]
where the last step follows from Gershgorin's circle theorem.
As such, if $\nabla a$ is $2K\Delta$-sparse, we have
\[
\|\nabla a\|_1
\leq\sqrt{2K\Delta}\|\nabla a\|_2
\leq 2\Delta\sqrt{K}\|a\|_2.
\]
In particular, $L=O(\sqrt{K})$ when $\Delta$ is bounded.

One important example of gradient sparsity is total variation minimization for compressive imaging.
Indeed, if the pixels of an image are viewed as vertices of a grid graph (where each internal pixel has four neighbors: up, down, left, and right), then the total variation of the image $x$ is given by $\|x\|_\sharp$, which is often the objective function of choice (see~\cite{LustigDP:07}, for example).
It might also be beneficial to consider a $3$-dimensional image of voxels for applications like magnetic resonance imaging.
In either setting, the maximum degree is bounded ($\Delta=4,6$), and the uniform stability and robustness of compressed sensing in these settings has been demonstrated by Needell and Ward~\cite{NeedellW:13b,NeedellW:13}.

\subsection{Low-rank matrices}
\label{sec.4}

Let $\mathcal{H}$ be the Hilbert space of real (or complex) $n\times m$ matrices with inner product $\langle X,Y\rangle=\operatorname{Tr}[XY^*]$, and consider the \textit{nuclear norm} $\|X\|_\sharp=\|X\|_*$, defined to be the sum of the singular values of $X$.
Letting $\mathcal{A}=\sigma^{-1}(\Sigma_K)$ and $\mathcal{B}=\sigma^{-1}(\Sigma_{2K})$ denote the sets of matrices of rank at most $K$ and $2K$, respectively, then $L=\sqrt{2K}$, and property (ii) follows from a clever decomposition originally due to Recht, Fazel and Parrilo (Lemma~3.4 in~\cite{RechtFP:07}).

For any matrix $A$ of rank at most $K$, consider its singular value decomposition:
\[
A=U\left[\begin{array}{cc}\Sigma&0\\0&0\end{array}\right]V^*,
\]
where $\Sigma$ is $K\times K$.
Given any matrix $Z$, we then take $Y:=U^*ZV$ and consider the partition
\[
Y=\left[\begin{array}{cc}Y_{11}&Y_{12}\\Y_{21}&Y_{22}\end{array}\right],
\]
where $Y_{11}$ is $K\times K$.
We decompose $Z$ as the sum of the following matrices:
\[
Z_1:=U\left[\begin{array}{cc}0&0\\0&Y_{22}\end{array}\right]V^*,
\qquad
Z_2:=U\left[\begin{array}{cc}Y_{11}&Y_{12}\\Y_{21}&0\end{array}\right]V^*.
\]
It is straightforward to check that $\langle Z_1,Z_2\rangle=0$, and also that $AZ_1^*=0$ and $A^*Z_1=0$.
The latter conditions imply that $A$ and $Z_1$ have orthogonal row spaces and orthogonal column spaces, which in turn implies that $\|A+Z_1\|_*=\|A\|_*+\|Z_1\|_*$ (see Lemma~2.3 in~\cite{RechtFP:07}).
Finally, $[Y_{11}; Y_{21}]$ and $[Y_{12}; 0]$ each have rank at most $K$, and so $Z_2\in\mathcal{B}$.

This setting of compressed sensing also has a few applications.
For example, suppose you enter some signal into an unknown linear time-invariant system and then make time-domain observations.
If the system has low order, then its \textit{Hankel matrix} (whose entries are populated with translates of the system's impulse response) will have low rank.
As such, one can hope to estimate the system by minimizing the nuclear norm of the Hankel matrix subject to the observations (along with the linear constraints which define Hankel matrices)~\cite{RechtFP:07}.
For another application, consider \textit{quantum state tomography}, in which one seeks to determine a nearly pure quantum state (i.e., a low-rank self-adjoint positive semidefinite matrix with complex entries and unit trace) from observations which are essentially inner products with a collection of known matrices.
Then one can recover the unknown quantum state by minimizing the nuclear norm subject to the observations (and the linear constraint that the trace must be $1$)~\cite{GrossEtal:10}.
We note that the set $\mathbb{H}^{n\times n}$ of self-adjoint $n\times n$ matrices with complex entries is a real vector space of $n^2$ dimensions, which is slightly different from the setting of this subsection, but still forms a CS space with
\[
\mathcal{A}=\sigma^{-1}(\Sigma_K)\cap\mathbb{H}^{n\times n},
\qquad
\mathcal{B}=\sigma^{-1}(\Sigma_{2K})\cap\mathbb{H}^{n\times n},
\qquad
\|\cdot\|_\sharp=\|\cdot\|_*,
\qquad
L=\sqrt{2K}
\]
by the same proof.
For the original setting of not-necessarily-self-adjoint matrices, uniformly stable and robust compressed sensing was demonstrated by Mohan and Fazel~\cite{MohanF:10}.

\section{Geometric meaning of RWP}

The previous section characterized stable and robust compressed sensing in terms of a new property called the robust width property.
In this section, we shed some light on what this property means geometrically.
We start with a definition which we have adapted from~\cite{KashinT:07}:

\begin{definition}
We say a linear operator $\Phi\colon\mathcal{H}\rightarrow\mathbb{F}^M$ satisfies the \textit{$\rho$-width property over $B_\sharp$} if
\[
\|x\|_2\leq\rho\|x\|_\sharp
\]
for every $x$ in the null space of $\Phi$.
\end{definition}

Notice that the width property is actually a property of the null space of $\Phi$.
This is not the case for the robust width property since, for example, multiplying $\Phi$ by a scalar will have an effect on $\alpha$, and yet the null space remains unaltered.
In this section, we essentially mod out such modifications by focusing on a subclass of ``normalized'' sensing operators.
In particular, we only consider $\Phi$'s satisfying $\Phi\Phi^*=I$, which is to say that the measurement vectors are orthonormal.
For this subclass of operators, we will show that the robust width property is an intuitive property of the null space.
For further simplicity, we focus on the case in which $\mathcal{H}=\mathbb{R}^N$.

\begin{definition}
Let $\operatorname{Gr}(N,N-M)$ denote the Grassmannian, that is, the set of all subspaces of $\mathbb{R}^N$ of dimension $N-M$.
Given subspaces $X,Y\in\operatorname{Gr}(N,N-M)$, consider the corresponding orthogonal projections $P_X$ and $P_Y$.
Then the \textit{gap metric} $d$ over $\operatorname{Gr}(N,N-M)$ is defined by $d(X,Y):=\|P_X-P_Y\|_2$.
\end{definition}

\begin{theorem}
\label{thm.RWP vs WP}
A linear operator $\Phi\colon\mathbb{R}^N\rightarrow\mathbb{R}^M$ with $\Phi\Phi^*=I$ and null space $Y$ satisfies the $(\rho,\alpha)$-robust width property over $B_\sharp$ if and only if for every subspace $X$ with $d(X,Y)<\alpha$, every linear operator $\Psi\colon\mathbb{R}^N\rightarrow\mathbb{R}^M$ with $\Psi\Psi^*=I$ and null space $X$ satisfies the $\rho$-width property over $B_\sharp$.
\end{theorem}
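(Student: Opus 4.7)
The key structural observation is that $\Phi\Phi^*=I$ forces $\Phi^*\Phi$ to be the orthogonal projection $P_{Y^\perp}$ onto the row space of $\Phi$, so $\|\Phi x\|_2 = \|P_{Y^\perp} x\|_2$ for every $x\in\mathbb{R}^N$. Also, the $\rho$-width property of $\Psi$ depends on $\Psi$ only through its null space, so the right-hand side of the biconditional is equivalent to the statement ``$\|x\|_2 \le \rho\|x\|_\sharp$ for every $x$ lying in some $(N-M)$-dimensional subspace $X$ with $d(X,Y) < \alpha$.'' With this reduction, both directions come down to relating $\|\Phi x\|_2/\|x\|_2 = \|P_{Y^\perp} x\|_2/\|x\|_2$ to the gap distance from $Y$.

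For the forward direction, assume $(\rho,\alpha)$-RWP, pick any $(N-M)$-dimensional $X$ with $d(X,Y)<\alpha$, and take any nonzero $x\in X$. Since $P_X x = x$, we have $P_{Y^\perp}x = (P_X - P_Y)x$, hence $\|\Phi x\|_2 \le d(X,Y)\,\|x\|_2 < \alpha\|x\|_2$, and RWP delivers $\|x\|_2 \le \rho\|x\|_\sharp$, as required.

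For the reverse direction, given any $x\neq 0$ with $\|\Phi x\|_2 < \alpha\|x\|_2$, the plan is to exhibit an $(N-M)$-dimensional subspace $X\ni x$ with $d(X,Y)<\alpha$; applying the assumed $\rho$-WP on $X$ then finishes. Write $x = p+q$ with $p := P_Y x$ and $q := P_{Y^\perp}x$. The trivial cases $x\in Y$ (take $X=Y$) and $p=0$ (then necessarily $\alpha>1$, and one takes any $X\ni x$, since $\|P_X - P_Y\|_2\le 1$ always) handle the degenerate situations. In the generic case $p,q\neq 0$, set $X := (Y\cap p^\perp)\oplus\operatorname{span}(x)$, which has dimension $N-M$ because $x\notin Y$. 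The point is that $P_X - P_Y$ vanishes on $Y\cap p^\perp$ (both projections act as the identity there) and on $(X+Y)^\perp = Y^\perp\cap q^\perp$ (both projections vanish there), so a dimension count shows its support is the two-dimensional plane $\operatorname{span}(p,q)$. Working in the orthonormal basis $(p/\|p\|_2,\,q/\|q\|_2)$, $P_Y$ restricts to $\operatorname{diag}(1,0)$ while $P_X$ restricts to the rank-one projection onto $(c,s)$ with $c=\|p\|_2/\|x\|_2$ and $s=\|q\|_2/\|x\|_2$, and a direct $2\times 2$ computation yields $\|P_X-P_Y\|_2 = s = \|\Phi x\|_2/\|x\|_2 < \alpha$.

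The only substantive step is the explicit construction of $X$ in the reverse direction: the insight that replacing the one-dimensional slice $\operatorname{span}(p)$ of $Y$ by $\operatorname{span}(x)$ produces a perturbation of $Y$ whose gap distance is \emph{exactly} $\|\Phi x\|_2/\|x\|_2$. This saturation is what makes the constants $\rho$ and $\alpha$ match on the nose between the two properties; the forward direction and the $2\times 2$ operator norm calculation are then routine.
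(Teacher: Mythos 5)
Your proof is correct, and the heart of it---the swap subspace in the reverse direction---is in fact the same as the paper's: since $\langle y,x\rangle=\langle y,p\rangle$ for all $y\in Y$, your $X=(Y\cap p^\perp)\oplus\operatorname{span}(x)$ coincides with the paper's $Z+\operatorname{span}\{e\}$ where $Z=Y\cap\operatorname{span}\{e\}^\perp$ and $e=x/\|x\|_2$. Where you diverge is in the supporting machinery. The paper phrases everything as a set identity (its Lemma on $\bigcup_{d(X,Y)<\alpha}X=\{x:\|P_{Y^\perp}x\|_2<\alpha\|x\|_2\}$) and verifies membership via two gap-metric lemmas, one of which leans on the Golub--Van Loan identity $d(X,Y)=\|P_{Y^\perp}P_X\|_2$ together with an inner-product criterion; in particular, it never computes $d(X,Y)$ for the constructed $X$, but instead shows $X$ is contained in the sublevel set and then invokes the equivalence to conclude $d(X,Y)<\alpha$. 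You instead handle the forward direction with the one-line bound $\|P_{Y^\perp}x\|_2=\|(P_X-P_Y)x\|_2\le d(X,Y)\|x\|_2$ and the reverse direction by restricting $P_X-P_Y$ to $\operatorname{span}(p,q)$ and computing its norm exactly, getting the sharp value $d(X,Y)=\|\Phi x\|_2/\|x\|_2$ (the $2\times 2$ matrix has trace $0$ and determinant $-s^2$, so eigenvalues $\pm s$). Your route is shorter, self-contained (no external matrix-analysis reference), and exhibits the exact saturation that explains why $\rho$ and $\alpha$ transfer with no loss of constants; the paper's route packages the same geometry into reusable lemmas whose set-theoretic formulation (the thickened null space as a union of nearby Grassmannian points) is of independent interest and dispatches both directions of the theorem at once. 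One cosmetic point: when you replace ``every $\Psi$ with null space $X$ satisfies $\rho$-WP'' by ``$\|x\|_2\le\rho\|x\|_\sharp$ on $X$,'' the reverse implication tacitly uses that some $\Psi$ with $\Psi\Psi^*=I$ and $\ker\Psi=X$ exists (rows an orthonormal basis of $X^\perp$); this is trivial but worth a clause, and your degenerate cases ($x\in Y$, $p=0$) are handled correctly.
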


Imagine the entire Grassmannian, and consider the subset $\mathrm{WP}$ corresponding to the subspaces satisfying the $\rho$-width property.
Then by Theorem~\ref{thm.RWP vs WP}, a point $Y\in\mathrm{WP}$ satisfies the $(\rho,\alpha)$-robust width property precisely when the entire open ball centered at $Y$ of radius $\alpha$ lies inside $\mathrm{WP}$.
One can also interpret this theorem in the context of compressed sensing.
First, observe that proving Theorem~\ref{thm.width-recovery} with $\epsilon=0$ gives that the width property is equivalent to stable compressed sensing (cf.~\cite{KashinT:07}).
As such, if $\Phi$ allows for stable and robust compressed sensing with robustness constant $C_1$, then every $\Psi$ whose null space is within $\alpha=(2C_1)^{-1}$ of the null space of $\Phi$ will necessarily enjoy stability.
The remainder of this section proves Theorem~\ref{thm.RWP vs WP} with a series of lemmas:

\begin{lemma}
\label{lemma.gap metric equivalence 1}
For any subspaces $X,Y\subseteq\mathbb{R}^N$, we have
\[
\min_{\substack{x\in X\\\|x\|_2=1}}\max_{\substack{y\in Y\\\|y\|_2=1}}\langle x,y\rangle
=\sqrt{1-\big(d(X,Y)\big)^2}.
\]
\end{lemma}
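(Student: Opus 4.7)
The plan is to translate both sides into principal-angle language: the LHS will turn out to be $\cos \theta_{\max}(X,Y)$, while $d(X,Y)$ is $\sin \theta_{\max}(X,Y)$, so the asserted identity reduces to $\cos^2 + \sin^2 = 1$.

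First I would evaluate the inner maximum. For a fixed unit $x \in X$, Cauchy--Schwarz over $Y$ gives $\max_{y \in Y,\,\|y\|_2 = 1} \langle x, y \rangle = \|P_Y x\|_2$ (attained at the unit vector in the direction of $P_Y x$ when nonzero, and trivially otherwise). Using $\|P_Y x\|_2^2 + \|P_{Y^\perp} x\|_2^2 = 1$, the LHS rewrites as
\[
\sqrt{\,1 - \sup_{x \in X,\,\|x\|_2 = 1} \|P_{Y^\perp} x\|_2^2\,}.
\]
The inner supremum equals the operator norm $\|P_{Y^\perp} P_X\|_2$: decomposing an arbitrary unit $v = v_X + v_{X^\perp}$ gives $P_{Y^\perp} P_X v = P_{Y^\perp} v_X$, which is maximized by pushing all of the norm into $v_X$.

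Next I would identify $\|P_{Y^\perp} P_X\|_2$ with $d(X,Y) = \|P_X - P_Y\|_2$. The key algebraic identity is
\[
P_X - P_Y \;=\; P_X P_{Y^\perp} \;-\; P_{X^\perp} P_Y,
\]
whose two summands have ranges in $X$ and $X^\perp$ respectively. Squaring the self-adjoint $P_X - P_Y$ then yields the block-diagonal (with respect to $\mathbb{R}^N = X \oplus X^\perp$) expression
\[
(P_X - P_Y)^2 \;=\; P_X P_{Y^\perp} P_X \;+\; P_{X^\perp} P_Y P_{X^\perp}.
\]
Each block has the form $AA^*$ (for $A = P_X P_{Y^\perp}$ or $A = P_{X^\perp} P_Y$), so the $C^\ast$ identity $\|AA^*\|_2 = \|A\|_2^2$ produces
\[
\|P_X - P_Y\|_2^2 \;=\; \max\bigl\{\|P_{Y^\perp} P_X\|_2^2,\ \|P_Y P_{X^\perp}\|_2^2\bigr\}.
\]

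I expect the main obstacle to be the final step: arguing that the second entry of this maximum does not exceed the first, which is what it takes to conclude $\|P_X - P_Y\|_2 = \|P_{Y^\perp} P_X\|_2$. This is where the (implicit) hypothesis $\dim X = \dim Y$ is essential---an easy counterexample with a proper inclusion $X \subsetneq Y$ exhibits failure otherwise---but it is automatic in the Grassmannian setting in which the lemma is applied (Theorem~\ref{thm.RWP vs WP}). Under equal dimensions, I would invoke the CS (principal--angle) decomposition to simultaneously block-diagonalize $P_X$ and $P_Y$ on invariant $2$-planes indexed by the principal angles $\theta_i$ between $X$ and $Y$; a direct $2 \times 2$ calculation on each block shows both $\|P_{Y^\perp} P_X\|_2$ and $\|P_Y P_{X^\perp}\|_2$ equal $\sin \theta_{\max}$, which combined with the reduction of the previous paragraphs finishes the proof.
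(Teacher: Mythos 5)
Your proposal is correct, and its first half is exactly the paper's argument: Cauchy--Schwarz identifies the inner maximum with $\|P_Y x\|_2$, and the Pythagorean theorem converts the outer minimum into $\sqrt{1-\|P_{Y^\perp}P_X\|_2^2}$. Where you diverge is the last step. The paper simply cites Theorem~2.6.1 of Golub and Van Loan for the identity $d(X,Y)=\|P_{Y^\perp}P_X\|_2$, whereas you prove it from scratch via the decomposition $P_X-P_Y=P_XP_{Y^\perp}-P_{X^\perp}P_Y$, the block-diagonal formula $(P_X-P_Y)^2=P_XP_{Y^\perp}P_X+P_{X^\perp}P_YP_{X^\perp}$, the $C^*$-identity, and a CS/principal-angle argument showing $\|P_{Y^\perp}P_X\|_2=\|P_YP_{X^\perp}\|_2=\sin\theta_{\max}$ when $\dim X=\dim Y$; all of these steps check out. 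Your route buys two things: a self-contained proof, and an explicit acknowledgment of the equal-dimension hypothesis that the paper's phrasing ``for any subspaces $X,Y\subseteq\mathbb{R}^N$'' quietly omits --- the identity (and the lemma) genuinely fails for, say, $X\subsetneq Y$, where the left-hand side is $1$ but $d(X,Y)=1$ makes the right-hand side $0$; the cited Golub--Van Loan theorem is likewise stated for equidimensional subspaces. This imprecision is harmless for the paper, since the lemma is only invoked for subspaces in $\operatorname{Gr}(N,N-M)$, but your version records the hypothesis where it is actually used, at the cost of invoking (or reproving) the CS decomposition rather than a textbook reference.
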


\begin{proof}
Pick $x\in X$ and $y\in Y$ with $\|x\|_2=\|y\|_2=1$.
Then Cauchy--Schwarz gives
\[
\langle x,y\rangle
=\langle P_{Y}x,y\rangle+\langle P_{Y^\perp}x,y\rangle
=\langle P_Yx,y\rangle
\leq\|P_Yx\|_2,
\]
with equality precisely when $y=P_Yx/\|P_Yx\|_2$.
As such, the left-hand side of the claimed identity can be simplified as
\[
\mathrm{LHS}
:=\min_{\substack{x\in X\\\|x\|_2=1}}\max_{\substack{y\in Y\\\|y\|_2=1}}\langle x,y\rangle
=\min_{\substack{x\in X\\\|x\|_2=1}}\|P_Yx\|_2.
\]
Next, we appeal to the Pythagorean theorem to get
\begin{equation}
\label{eq.deriving rhs}
\sqrt{1-\mathrm{LHS}^2}
=\max_{\substack{x\in X\\\|x\|_2=1}}\|P_{Y^\perp}x\|_2
=\max_{\substack{z\in\mathbb{R}^N\\\|z\|_2=1}}\|P_{Y^\perp}P_Xz\|_2
=\|P_{Y^\perp}P_X\|_2.
\end{equation}
Finally, we appeal to Theorem~2.6.1 in~\cite{GolubV:96}, which states that $d(X,Y)=\|P_{Y^\perp}P_X\|_2$.
Substituting into \eqref{eq.deriving rhs} and rearranging then gives the result.
\end{proof}

\begin{lemma}
\label{lemma.gap metric equivalence 2}
For any subspaces $X,Y\subseteq\mathbb{R}^N$, we have
\[
X\subseteq\{x:\|P_{Y^\perp}x\|_2<\alpha\|x\|_2\}
\]
if and only if for every $x\in X$, there exists $y\in Y$ of unit $\ell_2$ norm such that
\begin{equation}
\label{eq.large inner product}
\langle x,y\rangle
>\sqrt{1-\alpha^2}\|x\|_2.
\end{equation}
\end{lemma}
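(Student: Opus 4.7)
The plan is to prove this lemma via a pointwise equivalence: for each individual $x \in X$, the condition $\|P_{Y^\perp}x\|_2 < \alpha\|x\|_2$ holds if and only if there exists a unit vector $y \in Y$ with $\langle x, y \rangle > \sqrt{1-\alpha^2}\,\|x\|_2$. Once this is established, universally quantifying over $x \in X$ gives the stated set containment. Note that only the subspace structure of $Y$ matters in what follows; $X$ could be any subset.

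First I would apply the Pythagorean theorem to write $\|P_{Y^\perp}x\|_2^2 = \|x\|_2^2 - \|P_Yx\|_2^2$, so that (for the non-trivial case $\alpha \in [0,1]$) the condition $\|P_{Y^\perp}x\|_2 < \alpha\|x\|_2$ is equivalent to $\|P_Yx\|_2 > \sqrt{1-\alpha^2}\,\|x\|_2$. This reduces the whole problem to a statement about $\|P_Yx\|_2$, mirroring the reformulation used at the start of Lemma~\ref{lemma.gap metric equivalence 1}.

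For the forward direction, given $x \in X$ with $\|P_Yx\|_2 > \sqrt{1-\alpha^2}\,\|x\|_2$, I would note that in particular $P_Yx \neq 0$ and set $y := P_Yx/\|P_Yx\|_2 \in Y$. Then $\|y\|_2 = 1$ and $\langle x, y\rangle = \langle P_Yx, y\rangle = \|P_Yx\|_2 > \sqrt{1-\alpha^2}\,\|x\|_2$, as desired. For the reverse direction, given a unit $y \in Y$ with $\langle x, y\rangle > \sqrt{1-\alpha^2}\,\|x\|_2$, I decompose $x = P_Yx + P_{Y^\perp}x$; since $y \in Y$, this yields $\langle x, y\rangle = \langle P_Yx, y\rangle$, and Cauchy--Schwarz gives $\langle x, y\rangle \le \|P_Yx\|_2\|y\|_2 = \|P_Yx\|_2$. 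Chaining the two inequalities produces $\|P_Yx\|_2 > \sqrt{1-\alpha^2}\,\|x\|_2$, and Pythagoras again turns this into $\|P_{Y^\perp}x\|_2 < \alpha\|x\|_2$.

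There is essentially no obstacle here: the argument is a pointwise version of the Cauchy--Schwarz and Pythagoras computation already carried out in the proof of Lemma~\ref{lemma.gap metric equivalence 1}, where equality in $\langle x, y\rangle \le \|P_Yx\|_2$ is achieved precisely by the choice $y \propto P_Yx$. The only mild care needed is treating $\alpha > 1$ as a degenerate case (both sides of the biconditional hold trivially) and handling $x = 0$ (on which both quantified conditions are vacuous).
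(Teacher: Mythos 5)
Your proof is correct and follows essentially the same route as the paper: the forward direction is identical (take $y=P_Yx/\|P_Yx\|_2$ and apply Pythagoras), and your reverse direction, which bounds $\langle x,y\rangle\le\|P_Yx\|_2$ by Cauchy--Schwarz and then applies Pythagoras, is just a cosmetic repackaging of the paper's step $\|P_{Y^\perp}x\|_2=\|P_{Y^\perp}P_{y^\perp}x\|_2\le\|P_{y^\perp}x\|_2$. The degenerate cases ($\alpha>1$, $x=0$) you flag are handled the same way implicitly in the paper, so there is no substantive difference.
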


\begin{proof}
($\Rightarrow$)
Given $x\in X$, pick $y:=P_Y x/\|P_Y x\|_2$.
Then
\[
\langle x,y\rangle
=\bigg\langle x,\frac{P_Y x}{\|P_Y x\|_2}\bigg\rangle
=\|P_Y x\|_2
=\sqrt{\|x\|_2^2-\|P_{Y^\perp} x\|_2^2}
>\sqrt{1-\alpha^2}\|x\|_2,
\]
where the inequality follows from the assumed containment.

($\Leftarrow$)
Pick $x\in X$.
Then there exists $y\in Y$ of unit $\ell_2$ norm satisfying \eqref{eq.large inner product}.
Recall that for any subspaces $A\subseteq B$, the corresponding orthogonal projections satisfy $P_AP_B=P_A$.
Since $Y^\perp$ is contained in the orthogonal complement of $y$, we then have
\[
\|P_{Y^\perp}x\|_2^2
=\|P_{Y^\perp}P_{y^\perp}x\|_2^2
\leq\|P_{y^\perp}x\|_2^2
=\|x\|_2^2-|\langle x,y\rangle|^2
<\alpha^2\|x\|_2^2.
\]
Since our choice for $x$ was arbitrary, this proves the claim.
\end{proof}

We now use Lemmas~\ref{lemma.gap metric equivalence 1} and~\ref{lemma.gap metric equivalence 2} to prove the following lemma, from which Theorem~\ref{thm.RWP vs WP} immediately follows:

\begin{lemma}
Pick a subspace $Y\in\operatorname{Gr}(N,N-M)$.
Then
\[
\bigcup_{\substack{X\in\operatorname{Gr}(N,N-M)\\d(X,Y)<\alpha}}X
=\{x:\|P_{Y^\perp}x\|_2<\alpha\|x\|_2\}.
\]
\end{lemma}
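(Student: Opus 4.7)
The plan is to prove the two set inclusions separately, invoking Lemmas~\ref{lemma.gap metric equivalence 1} and~\ref{lemma.gap metric equivalence 2} for the forward direction and giving an explicit rank-one modification of $Y$ for the reverse direction. Throughout, the point $x=0$ can be treated as a trivial edge case (it lies in both sides, interpreting the strict inequality appropriately), so I focus on nonzero $x$.

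For the inclusion $\subseteq$, suppose $x$ lies in some $X\in\operatorname{Gr}(N,N-M)$ with $d(X,Y)<\alpha$. If $\alpha>1$ the bound $\|P_{Y^\perp}x\|_2\leq\|x\|_2<\alpha\|x\|_2$ is automatic; otherwise $\sqrt{1-\alpha^2}$ is real, and Lemma~\ref{lemma.gap metric equivalence 1} yields
\[
\max_{\substack{y\in Y\\\|y\|_2=1}}\Big\langle \tfrac{x}{\|x\|_2},y\Big\rangle
\geq\sqrt{1-d(X,Y)^2}
>\sqrt{1-\alpha^2}
\]
for every nonzero $x\in X$, so the hypothesis of the $(\Leftarrow)$ direction of Lemma~\ref{lemma.gap metric equivalence 2} is met on $X$ and we conclude $x\in\{z:\|P_{Y^\perp}z\|_2<\alpha\|z\|_2\}$.

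For the inclusion $\supseteq$, given nonzero $x$ with $\beta:=\|P_{Y^\perp}x\|_2/\|x\|_2<\alpha$, I produce an $X$ by swapping a single direction of $Y$ for $x$. When $P_Yx\neq 0$, set $u:=P_Yx/\|P_Yx\|_2\in Y$ and
\[
X:=(Y\cap u^\perp)\oplus\operatorname{span}(x).
\]
A quick check (using $w'\in Y$, $w'\perp u$, and $\langle w',x\rangle=\langle w',P_Yx\rangle+\langle w',P_{Y^\perp}x\rangle=0$) shows the two summands are orthogonal, so this is an orthogonal direct sum of total dimension $(N-M-1)+1=N-M$, and clearly $x\in X$. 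For a unit $w\in X$, write $w=\lambda\hat{x}+w'$ with $\hat{x}:=x/\|x\|_2$ and $w'\in Y\cap u^\perp$; since $P_{Y^\perp}w'=0$ we get $\|P_{Y^\perp}w\|_2=|\lambda|\beta\leq\beta$, hence $d(X,Y)=\|P_{Y^\perp}P_X\|_2\leq\beta<\alpha$, using the identity $d(X,Y)=\|P_{Y^\perp}P_X\|_2$ from Theorem~2.6.1 in~\cite{GolubV:96} cited earlier. In the degenerate case $P_Yx=0$ one has $\beta=1$, which forces $\alpha>1$, so the same construction with any unit $u\in Y$ gives $d(X,Y)\leq 1<\alpha$.

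The forward direction is a direct bookkeeping application of the two preceding lemmas, so the only real obstacle is guessing the right construction for the reverse inclusion. The natural move, replacing the ``$Y$-component direction'' of $x$ in $Y$ by $x$ itself, is what makes the decomposition $X=(Y\cap u^\perp)\oplus\operatorname{span}(x)$ an orthogonal sum; this orthogonality is what lets the single parameter $\beta$ control $d(X,Y)$ exactly and deliver the strict inequality $d(X,Y)<\alpha$.
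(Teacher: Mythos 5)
Your proof is correct and follows essentially the same route as the paper: the forward inclusion is the same bookkeeping with the two gap-metric lemmas, and your reverse-inclusion subspace $(Y\cap u^\perp)\oplus\operatorname{span}(x)$ coincides with the paper's $\left(Y\cap\operatorname{span}\{e\}^\perp\right)+\operatorname{span}\{e\}$, since $\langle w,x\rangle=\langle w,P_Yx\rangle$ for all $w\in Y$. The only (harmless) difference is that you certify $d(X,Y)<\alpha$ by directly bounding $\|P_{Y^\perp}P_X\|_2\leq\beta$, whereas the paper shows $X$ is contained in the right-hand set and then re-invokes Lemmas~\ref{lemma.gap metric equivalence 1} and~\ref{lemma.gap metric equivalence 2}.
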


\begin{proof}
Let $U$ and $E$ denote the left- and right-hand sides, respectively.
Lemma~\ref{lemma.gap metric equivalence 1} gives that $d(X,Y)<\alpha$ is equivalent to having that for every $x\in X$, there exists $y\in Y$ of unit $\ell_2$ norm such that $\langle x,y\rangle>\sqrt{1-\alpha^2}\|x\|_2$, which in turn is equivalent to $X\subseteq E$ by Lemma~\ref{lemma.gap metric equivalence 2}.
The fact that $d(X,Y)<\alpha$ implies $X\subseteq E$ immediately gives $U\subseteq E$. 
We will show that the converse implies the reverse containment, thereby proving set equality.

To this end, pick any $e\in E$.
If $e=0$, then $e\in Y$ lies in $U$.
Otherwise, we may assume $\|e\|_2=1$ without loss of generality since both $U$ and $E$ are closed under scalar multiplication.
Also, $U=\mathbb{R}^N=E$ if $\alpha>1$, and so we may assume $\alpha\leq1$.
Denote $Z:=Y\cap\operatorname{span}\{e\}^\perp$, and take $X:=Z+\operatorname{span}\{e\}$.
Since $\alpha\leq1$, we know that $\|P_{Y^\perp}e\|_2<\alpha\|e\|_2\leq\|e\|_2$, i.e., $e\not\in Y^\perp$, and so $\operatorname{dim}(Z)=\operatorname{dim}(Y)-1$, which in turn implies $X\in\operatorname{Gr}(N,N-M)$.
To see that $X\subseteq E$, pick any $x\in X$.
Then 
\[
P_{Y^\perp}x
=P_{Y^\perp}P_ex+P_{Y^\perp}P_Zx
=P_{Y^\perp}P_ex
=\langle x,e\rangle P_{Y^\perp}e.
\]
Taking norms of both sides then gives $\|P_{Y^\perp}x\|_2=\|P_{Y^\perp}e\|_2|\langle x,e\rangle|<\alpha\|x\|_2$, as desired.
Since $X\subseteq E$, we then have $d(X,Y)<\alpha$ by Lemmas~\ref{lemma.gap metric equivalence 1} and~\ref{lemma.gap metric equivalence 2}, and so $e\in X\subseteq U$.
Finally, since our choice for $e$ was arbitrary, we conclude that $E\subseteq U$.
\end{proof}

\section{A direct proof that RIP implies RWP}

Recall from Section~2 that the restricted isometry property (RIP) implies part (b) of Theorem~\ref{thm.width-recovery} in the traditional sparsity case~\cite{Candes:08}.
As such, one could pass through the equivalence to conclude that RIP implies the robust width property (RWP).
For completeness, this section provides a direct proof of this result.
(Instead of presenting different versions of RIP for different CS spaces, we focus on the traditional sparsity case here; the proofs for other cases are similar.)

For the direct proof, it is particularly convenient to use a slightly different version of RIP:
We say $\Phi$ satisfies the \textit{$(J,\delta)$-restricted isometry property} if
\[
(1-\delta)\|x\|_2
\leq\|\Phi x\|_2
\leq(1+\delta)\|x\|_2
\]
for every $J$-sparse vector $x$.
We note that this ``no squares'' version of RIP is equivalent to the original version up to constants, and is not unprecedented (see~\cite{RechtFP:07}, for example).

\begin{theorem}
\label{thm.rip implies rwp}
Suppose $\Phi$ satisfies the $(J,\delta)$-restricted isometry property with $\delta<1/3$.
Then $\Phi$ satisfies the $(\rho,\alpha)$-robust width property over $B_1$ with
\[
\rho=\frac{3}{\sqrt{J}},
\qquad
\alpha=\frac{1}{3}-\delta.
\]
\end{theorem}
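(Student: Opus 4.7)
The plan is to prove the contrapositive of RWP: if $\|x\|_2 > (3/\sqrt{J})\|x\|_1$, then $\|\Phi x\|_2 \geq (1/3-\delta)\|x\|_2$. The machinery will be the standard ``tube decomposition'' trick that one uses when passing from RIP to NSP-type bounds (as in Candes' proof of stable/robust recovery), but bookkept so that the target constants $\rho=3/\sqrt{J}$ and $\alpha = 1/3-\delta$ fall out directly.

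First I would fix $x \in \mathbb{R}^N$ with $\|x\|_2 > (3/\sqrt{J})\|x\|_1$ and sort the coordinates by decreasing magnitude, partitioning them into consecutive blocks $T_0, T_1, T_2, \ldots$ each of size $J$ (with the last block possibly smaller). The key numeric lemma is the tail inequality $\|x_{T_k}\|_2 \leq \|x_{T_{k-1}}\|_1/\sqrt{J}$ for $k\geq 1$, obtained because each entry of $x_{T_k}$ is bounded by the average magnitude of entries of $x_{T_{k-1}}$. Summing yields
\[
\sum_{k\geq 1}\|x_{T_k}\|_2 \;\leq\; \frac{\|x\|_1}{\sqrt{J}} \;<\; \frac{\|x\|_2}{3},
\]
where the last inequality uses our standing hypothesis.

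Second, I would lower bound the head $\|x_{T_0}\|_2$. By the triangle inequality applied to $x = x_{T_0} + x_{T_0^c}$ and $\|x_{T_0^c}\|_2 \leq \sum_{k\geq 1}\|x_{T_k}\|_2$, we get $\|x_{T_0}\|_2 \geq \|x\|_2 - \|x\|_1/\sqrt{J} > (2/3)\|x\|_2$. Finally, I would plug both estimates into the reverse triangle inequality for $\Phi x$, with RIP supplying $\|\Phi x_{T_0}\|_2 \geq (1-\delta)\|x_{T_0}\|_2$ and $\|\Phi x_{T_k}\|_2 \leq (1+\delta)\|x_{T_k}\|_2$ for $k\geq 1$ (each $x_{T_k}$ is $J$-sparse, so the ``no-squares'' version of RIP applies directly, avoiding any tedious conversion):
\[
\|\Phi x\|_2 \;\geq\; (1-\delta)\|x_{T_0}\|_2 - (1+\delta)\sum_{k\geq 1}\|x_{T_k}\|_2 \;>\; \tfrac{2}{3}(1-\delta)\|x\|_2 - \tfrac{1}{3}(1+\delta)\|x\|_2 \;=\; (\tfrac{1}{3}-\delta)\|x\|_2.
\]

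There is really no hard step; the ``no-squares'' form of RIP is precisely what makes the constants come out cleanly, whereas the standard squared version would force a less transparent calculation. The only thing to watch is ensuring that the strict inequality in the hypothesis $\|x\|_2 > (3/\sqrt{J})\|x\|_1$ propagates to a strict inequality $\|\Phi x\|_2 > \alpha\|x\|_2$, matching the strict inequality in the statement of RWP exactly; this is automatic from the chain above. The hypothesis $\delta < 1/3$ is needed only to guarantee $\alpha > 0$, so that the conclusion is non-vacuous.
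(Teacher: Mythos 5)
Your proposal is correct and is essentially the paper's proof: the same decomposition into consecutive size-$J$ blocks sorted by magnitude, the same tail estimate $\sum_{k\geq 1}\|x_{T_k}\|_2\leq \|x\|_1/\sqrt{J}<\|x\|_2/3$, the same head bound $\|x_{T_0}\|_2>\tfrac{2}{3}\|x\|_2$, and the same combination of the lower RIP bound on the head with the upper RIP bound on the tail, producing identical constants. The only difference is presentational: you argue the contrapositive of RWP directly, while the paper packages the same inequalities as ``upper RIP plus failure of RWP forces a violation of the lower RIP bound'' in order to highlight the asymmetric roles of the two RIP inequalities.
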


The proof makes use of the following lemma:

\begin{lemma}
\label{lemma.rip trick}
Suppose $\Phi$ satisfies the right-hand inequality of the $(J,\delta)$-restricted isometry property.
Then for every $x$ such that $\|x\|_2>\rho\|x\|_1$, we have 
\[
\|x-x_J\|_2<\frac{1}{\rho\sqrt{J}}\|x\|_2,
\qquad
\|\Phi(x-x_J)\|_2<\frac{1+\delta}{\rho\sqrt{J}}\|x\|_2.
\]
\end{lemma}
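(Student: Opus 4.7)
The plan is to prove both inequalities by combining two classical estimates: a Stechkin-type tail bound obtained from the decreasing rearrangement of $x$, and a block decomposition of $x - x_J$ into $J$-sparse pieces to which the upper RIP bound can be applied termwise.

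For the first inequality, I would order the entries of $x$ in decreasing order of magnitude $|x_{(1)}| \geq |x_{(2)}| \geq \cdots$ and use the trivial bound $k|x_{(k)}| \leq \|x\|_1$, i.e.\ $|x_{(k)}| \leq \|x\|_1/k$. Summing squares over $k > J$ and comparing to $\int_J^\infty t^{-2}\,dt = 1/J$ yields $\|x - x_J\|_2 \leq \|x\|_1/\sqrt{J}$. The hypothesis $\|x\|_2 > \rho\|x\|_1$ (equivalently $\|x\|_1 < \|x\|_2/\rho$) then gives the first claim with strict inequality.

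For the second inequality, I would partition the complement of $\operatorname{supp}(x_J)$ into consecutive index blocks $T_1, T_2, \ldots$ of size $J$ (the last possibly smaller), ordered so that $T_j$ contains the entries with the $(jJ{+}1)$-st through $(j{+}1)J$-th largest magnitudes of $|x|$. Each restriction $x_{T_j}$ is $J$-sparse, so the triangle inequality together with the upper RIP bound gives $\|\Phi(x - x_J)\|_2 \leq (1+\delta)\sum_{j\geq 1}\|x_{T_j}\|_2$. The standard shift trick is that every entry of $|x|$ on $T_{j+1}$ is at most the minimum of $|x|$ on $T_j$, hence at most the average $\|x_{T_j}\|_1/J$, so $\|x_{T_{j+1}}\|_2 \leq \sqrt{J}\|x_{T_{j+1}}\|_\infty \leq \|x_{T_j}\|_1/\sqrt{J}$. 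Reindexing and telescoping (with $T_0 := \operatorname{supp}(x_J)$ supplying the first term) yields $\sum_{j\geq 1}\|x_{T_j}\|_2 \leq \|x\|_1/\sqrt{J}$, and the hypothesis converts this to $\|\Phi(x-x_J)\|_2 < (1+\delta)\|x\|_2/(\rho\sqrt{J})$.

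I do not anticipate any substantial obstacle: both ingredients are standard rearrangement/RIP manipulations and the only nonupper-RIP input is the sorting of coordinates by magnitude. The only point requiring care is aligning the blocks so that the inequality $\|x_{T_{j+1}}\|_\infty \leq \|x_{T_j}\|_1/J$ is immediate, which forces the blocks to be of equal size $J$ taken in strict order of decreasing magnitude; once that bookkeeping is fixed, both inequalities drop out of the hypothesis $\|x\|_1 < \|x\|_2/\rho$ with the claimed constants.
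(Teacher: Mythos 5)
Your proposal is correct and follows essentially the same route as the paper: the second inequality is proved exactly as in the paper, by splitting $x-x_J$ into size-$J$ blocks ordered by decreasing magnitude, applying the upper RIP bound blockwise, and using the shift estimate $\|x_{T_{j+1}}\|_2\leq\|x_{T_j}\|_1/\sqrt{J}$ together with the hypothesis $\|x\|_1<\|x\|_2/\rho$. The only (cosmetic) difference is that you obtain the first inequality via a direct Stechkin-type rearrangement bound $\|x-x_J\|_2\leq\|x\|_1/\sqrt{J}$, whereas the paper reuses the same block sums; both yield the identical estimate with the claimed constants.
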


\begin{proof}
Let $T_0$ denote the indices of the $J$ largest entries of $x$, and for each $j\geq 1$, let $T_j$ denote the indices of the $J$ largest entries of $x$ not covered by $T_i$ for $i<j$.
Then
\[
\|x_{T_{j+1}}\|_2
\leq\sqrt{J}\max_{i\in T_{j+1}}|x[i]|
\leq\sqrt{J}\min_{i\in T_{j}}|x[i]|
\leq\frac{1}{\sqrt{J}}\|x_{T_j}\|_1
\]
for every $j\geq0$.
As such,
\[
\|x-x_J\|_2
\leq\sum_{j\geq1}\|x_{T_j}\|_2
\leq\frac{1}{\sqrt{J}}\sum_{j\geq0}\|x_{T_j}\|_1
=\frac{1}{\sqrt{J}}\|x\|_1
<\frac{1}{\rho\sqrt{J}}\|x\|_2,
\]
where the last step uses the hypothesis.
Similarly,
\[
\|\Phi(x-x_J)\|_2
\leq\sum_{j\geq1}\|\Phi x_{T_j}\|_2
\leq(1+\delta)\sum_{j\geq1}\|x_{T_j}\|_2
<\frac{1+\delta}{\rho\sqrt{J}}\|x\|_2,
\]
as claimed.
\end{proof}

\begin{proof}[Proof of Theorem~\ref{thm.rip implies rwp}]
We will prove a contrapositive of sorts.
In particular, we will assume $\Phi$ satisfies the right-hand inequality of the restricted isometry property, but violates the robust width property, and we will show that $\Phi$ necessarily violates the left-hand inequality of the restricted isometry property.

To this end, pick $x$ such that $\|\Phi x\|_2<\alpha\|x\|_2$ but $\|x\|_2>\rho\|x\|_1$.
Then
\begin{equation}
\label{eq.rip to rwp 1}
\|\Phi x_J\|_2
\leq\|\Phi x\|_2+\|\Phi(x-x_J)\|_2
<\alpha\|x\|_2+\|\Phi(x-x_J)\|_2
<\frac{2}{3}(1-\delta)\|x\|_2,
\end{equation}
where the last step uses Lemma~\ref{lemma.rip trick} along with our choices for $\rho$ and $\alpha$.
Another application of Lemma~\ref{lemma.rip trick} gives
\[
\|x\|_2
\leq\|x_J\|_2+\|x-x_J\|_2
<\|x_J\|_2+\frac{1}{3}\|x\|_2.
\]
Isolating $\|x\|_2$ and substituting into \eqref{eq.rip to rwp 1} then gives $\|\Phi x_J\|_2<(1-\delta)\|x_J\|_2$, as desired.
\end{proof}

One nice feature of the above proof is that it reveals asymmetry between the upper and lower RIP bounds.
Indeed, the upper RIP bound is only used to control $\|\Phi(x-x_J)\|_2$, whereas the lower RIP bound is directly related to $\alpha$.
Notice that the contrapositive of RWP posits that
\begin{equation}
\label{eq.rwp contrapositive}
\|\Phi x\|_2\geq\alpha\|x\|_2
\end{equation}
whenever $\|x\|_2>\rho\|x\|_1$, which is the case when $x$ is $J$-sparse with $J\leq1/\rho^2$.
As such, RWP implies lower RIP with $\delta=1-\alpha$.
Going the other direction, we could take $\alpha=1-\delta$ and conclude from lower RIP that \eqref{eq.rwp contrapositive} holds for every sparse $x$, but in order to get \eqref{eq.rwp contrapositive} for the remaining ``nearly sparse'' vectors, namely, those for which $\|x\|_2>\rho\|x\|_1$, the above proof appeals to an upper RIP bound.
The next section presents some alternative methods for demonstrating RWP.

\section{RIP-free approaches to RWP}

Considering the previous section, one can certainly appeal to the restricted isometry property (RIP) in order to establish the robust width property (RWP) for a given sensing operator.
The purpose of this section is to provide alternative proof techniques for RWP.
Here, it will be convenient to use the contrapositive statement of RWP: 
A linear operator $\Phi\colon\mathcal{H}\rightarrow\mathbb{F}^M$ satisfies the $(\rho,\alpha)$-robust width property over $B_\sharp$ if and only if
\[
\|\Phi x\|_2\geq\alpha\|x\|_2
\]
for every $x\in\mathcal{H}$ such that $\|x\|_2>\rho\|x\|_\sharp$.
By scaling, RWP is further equivalent to having $\|\Phi x\|_2\geq\alpha$ for every $x$ with $\|x\|_2=1$ and $\|x\|_\sharp<\rho^{-1}$.
Since $\|\Phi x\|_2$ is a continuous function of $x$, we deduce the following lemma:

\begin{lemma}
\label{lemma.rwp characterization}
A linear operator $\Phi\colon\mathcal{H}\rightarrow\mathbb{F}^M$ satisfies the $(\rho,\alpha)$-robust width property over $B_\sharp$ if and only if $\|\Phi x\|_2\geq\alpha$ for every $x\in\rho^{-1}B_\sharp\cap\mathbb{S}$, where $\mathbb{S}$ denotes the unit sphere in $\mathcal{H}$.
\end{lemma}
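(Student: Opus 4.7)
\textit{Proof proposal.} The plan is to verify both directions of the equivalence by reducing the $(\rho,\alpha)$-RWP condition, via the contrapositive and the positive homogeneity of the two norms, to a statement about the set $\rho^{-1}B_\sharp\cap\mathbb{S}$, exactly as the two paragraphs preceding the lemma outline. Continuity of $x\mapsto\|\Phi x\|_2$ will then let me pass from strict to non-strict $\sharp$-norm constraints at the boundary.

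For the $(\Leftarrow)$ direction, I will assume $\|\Phi x\|_2\ge\alpha$ on $\rho^{-1}B_\sharp\cap\mathbb{S}$ and deduce RWP contrapositively: given any nonzero $x$ with $\|x\|_2>\rho\|x\|_\sharp$, normalize to $y:=x/\|x\|_2\in\mathbb{S}$; observe that $\|y\|_\sharp=\|x\|_\sharp/\|x\|_2<1/\rho$, so that $y\in\rho^{-1}B_\sharp\cap\mathbb{S}$; and rescale the hypothesis $\|\Phi y\|_2\ge\alpha$ back to $\|\Phi x\|_2\ge\alpha\|x\|_2$. This rules out any $x$ simultaneously satisfying $\|x\|_2>\rho\|x\|_\sharp$ and $\|\Phi x\|_2<\alpha\|x\|_2$, which is exactly RWP.

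For the $(\Rightarrow)$ direction, I will assume RWP and take any $x\in\rho^{-1}B_\sharp\cap\mathbb{S}$. When $\|x\|_\sharp<1/\rho$ holds strictly, we have $\|x\|_2=1>\rho\|x\|_\sharp$, so the contrapositive of RWP immediately yields $\|\Phi x\|_2\ge\alpha\|x\|_2=\alpha$. For the boundary case $\|x\|_\sharp=1/\rho$, I will approximate $x$ by a sequence $\{x_n\}\subseteq\mathbb{S}$ with $\|x_n\|_\sharp<1/\rho$ and $x_n\to x$, apply the previous case to each $x_n$, and pass to the limit using continuity of $\|\Phi\,\cdot\,\|_2$ to conclude $\|\Phi x\|_2\ge\alpha$.

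The only subtle point is producing the approximating sequence in the boundary case of $(\Rightarrow)$; this reduces to a routine finite-dimensional perturbation argument (shift $x$ slightly in a direction that strictly decreases $\|\cdot\|_\sharp$ and renormalize), and I do not expect it to present a real obstacle.
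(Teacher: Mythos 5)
Your $(\Leftarrow)$ direction and the interior case of $(\Rightarrow)$ are correct, and they are exactly the scaling/contrapositive argument the paper has in mind. The genuine problem is the step you set aside as routine. What your limiting argument needs is that every $x\in\mathbb{S}$ with $\|x\|_\sharp=\rho^{-1}$ is a limit of points of $\mathbb{S}$ with $\sharp$-norm strictly less than $\rho^{-1}$, i.e.\ that $\{x\in\mathbb{S}:\|x\|_\sharp<\rho^{-1}\}$ is dense in $\rho^{-1}B_\sharp\cap\mathbb{S}$. For a general norm $\|\cdot\|_\sharp$ and general $\rho$ this is false: it fails exactly at points of $\mathbb{S}$ that are local minimizers of $\|\cdot\|_\sharp$ restricted to $\mathbb{S}$ with value equal to $\rho^{-1}$. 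The extreme case $\|\cdot\|_\sharp=\|\cdot\|_2$, $\rho=1$ makes the failure stark: the strict sublevel set is empty, so $(1,\alpha)$-RWP holds vacuously for every $\Phi$ and every $\alpha$, while the conclusion $\|\Phi x\|_2\geq\alpha$ on $B_2\cap\mathbb{S}=\mathbb{S}$ fails whenever $\Phi$ has a nontrivial kernel; one can also build non-degenerate planar examples (a polygonal $B_\sharp$ with a vertex at Euclidean distance exactly $\rho^{-1}$ from the origin that is a strict local maximum of the radial function) in which the strict set is nonempty, RWP holds with some $\alpha>0$, and yet $\Phi$ annihilates that vertex direction, which lies in $\rho^{-1}B_\sharp\cap\mathbb{S}$. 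Your proposed fix, ``shift $x$ slightly in a direction that strictly decreases $\|\cdot\|_\sharp$ and renormalize,'' does not repair this: such a shift typically also shrinks $\|\cdot\|_2$, and dividing by the new $\ell_2$ norm can push the $\sharp$-norm back above $\rho^{-1}$. What you actually need is a direction along which $\|\cdot\|_\sharp$ decreases \emph{on the sphere}, and at a local minimizer of $\|\cdot\|_\sharp|_{\mathbb{S}}$ no such direction exists.

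In fairness, the same issue is buried in the paper's own one-line justification (``since $\|\Phi x\|_2$ is a continuous function of $x$\ldots''): continuity extends the lower bound only to the closure of the strict set, which need not be all of $\rho^{-1}B_\sharp\cap\mathbb{S}$. Note also that only your $(\Leftarrow)$ direction is used downstream (Gordon's escape theorem and the bowling scheme produce lower bounds over the closed set, from which RWP follows), and in the concrete settings of interest the density claim does hold; for instance, with $B_\sharp=B_1$ and $\rho^{-1}=\sqrt{J}>1$, any unit vector with $\|x\|_1=\sqrt{J}$ has at least two nonzero coordinates, and rotating mass toward the larger of them within their coordinate plane strictly decreases $\|\cdot\|_1$ while staying on $\mathbb{S}$. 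So your outline can be completed either by adding and verifying this density property for the particular $B_\sharp$ and $\rho$ at hand, or by contenting yourself with the implication that is actually needed later; but as a proof of the lemma in the stated generality, the boundary step is a real gap rather than a routine perturbation.
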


In the following subsections, we demonstrate RWP by leveraging Lemma~\ref{lemma.rwp characterization} along with ideas from geometric functional analysis.
Interestingly, such techniques are known to outperform RIP-based analyses in certain regimes~\cite{BlanchardCT:11,RudelsonV:08}.
To help express how this section interacts with the remainder of the paper, we have included a ``cheat sheet for the practitioner'' that explains how one might apply this paper to future instances of structured sparsity.
The remainder of this section takes $\mathcal{H}=\mathbb{R}^N$ and $\mathbb{F}=\mathbb{R}$.

\begin{figure}[t]
\begin{framed}
\medskip
\begin{footnotesize}
\begin{center}
\uppercase{\textbf{Cheat sheet for the practitioner}}
\end{center}
Given an instance of structured sparsity for potential compressed sensing, apply the following process:
\begin{itemize}[leftmargin=0.8cm,rightmargin=0.4cm]
\item[1.]
\textbf{Verify that you have a CS space.}
Consult Section~4 for examples and proofs.  
\item[2.]
\textbf{Determine robust width parameters.}
Consult Theorem~\ref{thm.width-recovery} to determine the pairs $(\rho,\alpha)$ that make the $\ell_2$ error of reconstruction acceptably low.
\item[3.]
\textbf{Estimate a Gaussian width.}
For each acceptable $\rho$, estimate the Gaussian width of $\rho^{-1}B_\sharp\cap\mathbb{S}$.
This can be done in two ways:
\begin{itemize}
\item[(a)]
\textit{Analytically.}
Mimic the proof of Lemma~\ref{lemma.L1 width} or consult~\cite{LedouxT:91,MilmanS:86,Pisier:89}.
\item[(b)]
\textit{Numerically.}
Observe that $\sup_{x\in S}\langle x,g\rangle$ is a convex program for each $g$.
As such, take a random sample of iid $N(0,I)$ vectors $\{g_j\}_{j\in J}$, run the convex program for each $j\in J$, and produce an upper confidence bound on the parameter $w(S):=\mathbb{E}\sup_{x\in S}\langle x,g\rangle$.
\end{itemize}
\item[4.]
\textbf{Calculate the number of measurements.}
Depending on the type of measurement vectors desired, consult Proposition~\ref{thm.main random result}, Proposition~\ref{prop.bowling 1}, or more generally Theorem~6.3 in~\cite{Tropp:14}.
The number of measurements will depend on $\rho$ (implicitly through the Gaussian width) and on $\alpha$.
Minimize this number over the acceptable pairs $(\rho,\alpha)$.
\end{itemize}
\end{footnotesize}
\end{framed}
\end{figure}

\subsection{Gordon scheme}

In this subsection, as in Section~5, we focus on the case in which $\Phi$ satisfies $\Phi\Phi^*=I$.
Letting $Y$ denote the null space of $\Phi$, we then have $\|\Phi x\|_2=\|P_{Y^\perp}x\|_2=\operatorname{dist}(x,Y)$.
As such, by Lemma~\ref{lemma.rwp characterization}, $\Phi$ satisfies RWP if its null space is of distance at least $\alpha$ from $S:=\rho^{-1}B_\sharp\cap\mathbb{S}^{N-1}$.
In other words, it suffices for the null space to have empty intersection with an $\alpha$-thickened version of $S$.
As we will see, a random subspace of sufficiently small dimension will do precisely this.

The following theorem is originally due to Gordon (see Theorem~3.3 in~\cite{Gordon:88}); this particular version is taken from~\cite{Mixon:14} (namely, Proposition~2).
A few definitions are needed before stating the theorem.
Define the \textit{Gaussian width} of $S\subseteq\mathbb{R}^k$ to be
\[
w(S):=\mathbb{E}\sup_{x\in S} \langle x,g\rangle,
\]
where $g$ has iid $N(0,1)$ entries.
Also, we take $a_k$ to denote the Gaussian width of $\mathbb{S}^{k-1}$, i.e., $a_k:=\mathbb{E}\|g\|_2$.
Overall, the Gaussian width is a measure of the size of a given set, and its utility is illustrated in the following theorem:

\begin{theorem}[Escape through a thickened mesh]
\label{thm.escape}
Take a closed subset $S\subseteq\mathbb{S}^{N-1}$.
If $w(S)<(1-\epsilon)a_M-\epsilon a_N$, then a subspace $Y$ drawn uniformly from $\operatorname{Gr}(N,N-M)$ satisfies
\[
\operatorname{Pr}\Big(Y\cap(S+\epsilon B_2)=\emptyset\Big)
\geq1-\frac{7}{2}\operatorname{exp}\bigg(-\frac{1}{2}\bigg(\frac{(1-\epsilon)a_M-\epsilon a_N-w(S)}{3+\epsilon+\epsilon a_N/a_M}\bigg)^2\bigg).
\]
\end{theorem}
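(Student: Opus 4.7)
The plan is to parametrize a uniformly random $Y \in \operatorname{Gr}(N,N-M)$ as $Y = \ker(G)$, where $G \in \mathbb{R}^{M \times N}$ has iid $N(0,1)$ entries (rotation invariance of the Gaussian law makes $\ker(G)$ uniform on the Grassmannian). For any $x \in \mathbb{S}^{N-1}$, the singular value decomposition of $G$ gives $\|Gx\|_2 \leq \|G\|_{\mathrm{op}}\cdot\operatorname{dist}(x,\ker G)$, so the desired escape event $\inf_{x \in S} \operatorname{dist}(x, Y) > \epsilon$ is implied by
\[
\inf_{x \in S} \|Gx\|_2 \;>\; \epsilon\,\|G\|_{\mathrm{op}}.
\]
Thus the proof reduces to a simultaneous lower bound on the left side and upper bound on the right side, combined by a union bound.

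For the lower bound, I would invoke Gordon's Gaussian min-max comparison inequality for the two centered processes $(x,u)\mapsto \langle Gx,u\rangle$ and $(x,u)\mapsto \langle g,x\rangle+\langle h,u\rangle$ indexed by $S \times \mathbb{S}^{M-1}$, with $g \sim N(0,I_N)$ and $h \sim N(0,I_M)$ independent. The covariance inequalities needed by Gordon hold because $\|x\|_2=\|u\|_2=1$ on the index set, and the theorem yields
\[
\mathbb{E}\inf_{x \in S}\|Gx\|_2 \;=\; \mathbb{E}\inf_{x\in S}\sup_{u\in\mathbb{S}^{M-1}}\langle Gx,u\rangle \;\geq\; \mathbb{E}\sup_{u\in\mathbb{S}^{M-1}}\langle h,u\rangle - \mathbb{E}\sup_{x\in S}\langle g,x\rangle \;=\; a_M - w(S).
\]
Since $G \mapsto \inf_{x\in S}\|Gx\|_2$ is $1$-Lipschitz in the Frobenius norm (using $\|x\|_2=1$ on $S$), the Borell--TIS inequality then gives $\inf_{x\in S}\|Gx\|_2 \geq a_M - w(S) - t$ with probability at least $1-\exp(-t^2/2)$.

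For the upper bound on $\|G\|_{\mathrm{op}}$, I would appeal to the Gordon/Chevet estimate $\mathbb{E}\|G\|_{\mathrm{op}} \leq a_M + a_N$, together with a second application of Borell--TIS (the operator norm is also $1$-Lipschitz in the Frobenius norm), yielding $\|G\|_{\mathrm{op}} \leq a_M + a_N + s$ except with probability $\exp(-s^2/2)$. A union bound gives the escape event whenever $a_M - w(S) - t > \epsilon(a_M + a_N + s)$, i.e.\ whenever $(1-\epsilon)a_M - \epsilon a_N - w(S) > t + \epsilon s$. Selecting $s$ and $t$ so that the two exponents balance (in particular taking $s$ comparable to $(a_N/a_M)\,t$) produces a single tail bound of the claimed Gaussian form, with a denominator of the shape $3 + \epsilon + \epsilon a_N/a_M$ once the loose numerical factors in Chevet's bound and the two Borell--TIS applications are absorbed.

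The main obstacle is bookkeeping rather than conceptual: Gordon's comparison, Chevet's operator-norm estimate, and Gaussian concentration are all off-the-shelf, but extracting the precise constants (the $7/2$ in the probability, the additive ``$3$'' in the denominator, and the correct ratio $a_N/a_M$ term) requires careful balancing of the two concentration tails in the union bound and of the slack between $\mathbb{E}\|G\|_{\mathrm{op}}$ and its concentration radius. This bookkeeping is precisely what makes the statement delicate, but it does not affect the overall structure of the argument.
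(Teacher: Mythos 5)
Your proposal is correct, but it is worth noting that the paper itself offers no proof of this theorem: it is imported verbatim from Gordon (Theorem~3.3 of the cited 1988 paper), in the form worked out in the cited note of Mixon, where the specific constants $7/2$ and $3+\epsilon+\epsilon a_N/a_M$ arise from Gordon's Gaussian \emph{probability}-comparison inequality applied once to a min--max process. Your route---parametrize $Y=\ker G$, reduce the escape event to $\inf_{x\in S}\|Gx\|_2>\epsilon\|G\|_{\mathrm{op}}$, lower-bound $\mathbb{E}\inf_{x\in S}\|Gx\|_2\geq a_M-w(S)$ by Gordon's expectation comparison, upper-bound $\mathbb{E}\|G\|_{\mathrm{op}}\leq a_M+a_N$ by Chevet/Slepian, and combine two Lipschitz concentration bounds by a union bound---is a legitimate, self-contained alternative (the covariance check for Gordon's comparison, the identity $\operatorname{dist}(x,\ker G)\geq\|Gx\|_2/\|G\|_{\mathrm{op}}$, and the $1$-Lipschitz claims are all correct; the only cosmetic quibble is that the concentration you need for $\inf_{x\in S}\|Gx\|_2$ is general Gaussian concentration for Lipschitz functions rather than Borell--TIS for suprema). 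Your closing worry about ``delicate bookkeeping'' is misplaced, and in fact your method cannot and need not reproduce Gordon's exact constants: it proves something strictly stronger, from which the stated bound follows a fortiori. Concretely, with $\gamma:=(1-\epsilon)a_M-\epsilon a_N-w(S)$ and $D:=3+\epsilon+\epsilon a_N/a_M$, simply take $t=s=\gamma/D$; since $D>1+\epsilon$, on the intersection of the two good events one gets $\inf_{x\in S}\|Gx\|_2-\epsilon\|G\|_{\mathrm{op}}\geq\gamma\bigl(1-(1+\epsilon)/D\bigr)>0$, so the failure probability is at most $2\exp\bigl(-\tfrac{1}{2}(\gamma/D)^2\bigr)\leq\tfrac{7}{2}\exp\bigl(-\tfrac{1}{2}(\gamma/D)^2\bigr)$, which is exactly the claim (and optimizing $t,s$ instead gives the sharper denominator $1+\epsilon$ and constant $2$). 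What Gordon's original single-comparison argument buys over your two-tail union bound is nothing in strength---only the particular historical constants quoted in the statement.
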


It is well known that $(1-1/k)\sqrt{k}<a_k<\sqrt{k}$ for each $k$.
Combined with Lemma~\ref{lemma.rwp characterization} and Theorem~\ref{thm.escape}, this quickly leads to the following result:

\begin{proposition}
\label{thm.main random result}
Fix $\lambda=M/N$, pick $\alpha<1/(1+1/\sqrt{\lambda})$, and denote $S=\rho^{-1}B_\sharp\cap\mathbb{S}^{N-1}$.
Suppose
\[
M\geq C\big(w(S)\big)^2
\]
for some $C>1/(1-(1+1/\sqrt{\lambda})\alpha)^2$.
Let $G$ be an $M\times N$ matrix with iid $N(0,1)$ entries.
Then $\Phi:=(GG^*)^{-1/2}G$ satisfies the $(\rho,\alpha)$-robust width property over $B_\sharp$ with probability $\geq1-4e^{-cN}$ for some $c=c(\lambda,\alpha,C)$.
\end{proposition}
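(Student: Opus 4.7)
The plan is to reduce the desired conclusion to the escape event of Theorem~\ref{thm.escape}, and then verify the Gaussian-width hypothesis of that theorem using the assumed lower bound on $M$. Almost surely $G$ has rank $M$, so $(GG^*)^{-1/2}$ is well defined, $\Phi\Phi^*=I$, and $\ker\Phi=\ker G$. Since the standard Gaussian distribution on $\mathbb{R}^{M\times N}$ is invariant under right multiplication by orthogonal matrices, the subspace $Y:=\ker G$ is uniformly distributed on $\operatorname{Gr}(N,N-M)$. Because $\Phi\Phi^*=I$, the identity $\|\Phi x\|_2=\|P_{Y^\perp}x\|_2=\operatorname{dist}(x,Y)$ holds for every $x\in\mathbb{R}^N$; combined with Lemma~\ref{lemma.rwp characterization}, this shows that $(\rho,\alpha)$-RWP for $\Phi$ is equivalent to the event $Y\cap(S+\alpha B_2^\circ)=\emptyset$, where $S=\rho^{-1}B_\sharp\cap\mathbb{S}^{N-1}$. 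This is exactly the escape event in Theorem~\ref{thm.escape} with $\epsilon=\alpha$.

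Next I would verify the hypothesis $w(S)<(1-\alpha)a_M-\alpha a_N$. Using the standard bounds $(1-1/k)\sqrt{k}<a_k<\sqrt{k}$ together with $M=\lambda N$,
\[
(1-\alpha)a_M-\alpha a_N>\sqrt{M}\Bigl[(1-\alpha)(1-1/M)-\alpha/\sqrt{\lambda}\Bigr].
\]
For $N$ sufficiently large, the bracket exceeds $1-\alpha(1+1/\sqrt{\lambda})-\eta$ for any prescribed small $\eta>0$. Since $C>1/(1-\alpha(1+1/\sqrt{\lambda}))^2$, the assumption $M\geq C(w(S))^2$ gives $w(S)\leq\sqrt{M/C}$ with $1/\sqrt{C}<1-\alpha(1+1/\sqrt{\lambda})$, so choosing $\eta$ small enough produces a constant $c_1=c_1(\lambda,\alpha,C)>0$ with
\[
(1-\alpha)a_M-\alpha a_N-w(S)\geq c_1\sqrt{M}=c_1\sqrt{\lambda N}.
\]
The denominator in the exponent of Theorem~\ref{thm.escape} is bounded by $3+\alpha+\alpha a_N/a_M\leq 4+1/\sqrt{\lambda}$, a constant depending only on $\lambda$.

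Plugging into Theorem~\ref{thm.escape} gives a failure probability bounded by $\tfrac{7}{2}\exp(-c_1^2\lambda N/(2(4+1/\sqrt{\lambda})^2))$, which is at most $4e^{-cN}$ for an appropriate $c=c(\lambda,\alpha,C)>0$ (for finitely many small $N$ not handled by the asymptotic bracket estimate, the trivial bound $1\leq 4e^{-cN}$ can be arranged by shrinking $c$). Finally, one has to intersect this event with the almost-sure event that $GG^*$ is invertible, which does not affect the probability bound.

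The main obstacle will be step two: showing that the gap $(1-\alpha)a_M-\alpha a_N-w(S)$ is genuinely $\Theta(\sqrt{N})$ uniformly under the stated hypotheses, so that Theorem~\ref{thm.escape} produces exponential decay at rate $cN$ rather than a weaker rate. Everything else (rotational invariance of the null space, the distance identity via $\Phi\Phi^*=I$, and rephrasing RWP as an escape event) is essentially bookkeeping on top of the ingredients already assembled in Sections~5 and~7.
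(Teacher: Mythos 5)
Your argument is correct and is essentially the proof the paper has in mind: the paper itself only remarks that Lemma~\ref{lemma.rwp characterization}, Theorem~\ref{thm.escape}, and the bounds $(1-1/k)\sqrt{k}<a_k<\sqrt{k}$ ``quickly lead'' to Proposition~\ref{thm.main random result}, and your reduction of RWP to the escape event for the uniformly random null space of $G$, followed by the width-gap estimate $(1-\alpha)a_M-\alpha a_N-w(S)\gtrsim\sqrt{M}$ under $C>1/(1-(1+1/\sqrt{\lambda})\alpha)^2$, is exactly that route. The only quibbles are cosmetic (e.g.\ $a_N/a_M$ can slightly exceed $1/\sqrt{\lambda}$, so the denominator bound should be, say, $4+2/\sqrt{\lambda}$ for $M\geq2$), and they do not affect the conclusion.
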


For the sake of a familiar example, we consider the case where $\|\cdot\|_\sharp=\|\cdot\|_1$.
The following estimates $w(S)$ in this case (this is essentially accomplished in~\cite{MendelsonPT:05}):

\begin{lemma}
\label{lemma.L1 width}
There exists an absolute constant $c$ such that
\[
w(\sqrt{J}B_1\cap \mathbb{S}^{N-1})\leq c\sqrt{J\log(cN/J)}
\]
for every positive integer $J$.
\end{lemma}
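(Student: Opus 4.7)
The plan is to realize $\sqrt{J}B_1\cap\mathbb{S}^{N-1}$ as (a constant multiple of) the convex hull of exact $J$-sparse unit vectors, and then apply the standard union-bound estimate of the Gaussian width of the $J$-sparse unit sphere. First, since $\mathbb{S}^{N-1}\subseteq B_2$, we have $\sqrt{J}B_1\cap\mathbb{S}^{N-1}\subseteq T:=\sqrt{J}B_1\cap B_2$, so by monotonicity it suffices to bound $w(T)$. Write $\Sigma_J$ for the set of $J$-sparse vectors in $\mathbb{R}^N$; the goal of the reduction will be the inclusion
\[
T\;\subseteq\;2\operatorname{conv}(\Sigma_J\cap B_2).
\]

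To prove this inclusion, pick $x\in T$ and decompose $x=\sum_{j\geq 0}x_{T_j}$ exactly as in the proof of Lemma~\ref{lemma.rip trick}: the index set $T_0$ consists of the $J$ largest entries of $|x|$, $T_1$ of the next $J$ largest, and so on. The same peeling inequality yields $\|x_{T_{j+1}}\|_2\leq\frac{1}{\sqrt{J}}\|x_{T_j}\|_1$ for $j\geq 0$, whence
\[
\sum_{j\geq 0}\|x_{T_j}\|_2
\;\leq\;\|x\|_2+\tfrac{1}{\sqrt{J}}\|x\|_1
\;\leq\;1+1\;=\;2.
\]
Setting $v_j:=x_{T_j}/\|x_{T_j}\|_2$ (for nonzero blocks) gives $v_j\in\Sigma_J\cap\mathbb{S}^{N-1}$ and $x=\sum_j\|x_{T_j}\|_2\,v_j$. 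Since the coefficients sum to at most $2$ and $0\in\Sigma_J\cap B_2$, one can write $x/2$ as a genuine convex combination of elements of $\Sigma_J\cap B_2$, establishing the inclusion.

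Next, I would use that the supremum of a linear functional over a set equals the supremum over its convex hull, so $w(\operatorname{conv}(S))=w(S)$ for any bounded $S$. Combining this with the previous step and homogeneity gives
\[
w(\sqrt{J}B_1\cap\mathbb{S}^{N-1})\;\leq\;w(T)\;\leq\;2\,w(\Sigma_J\cap B_2)
\;=\;2\,\mathbb{E}\max_{|I|=J}\|g_I\|_2.
\]
For each fixed $I$ with $|I|=J$, $\|g_I\|_2$ is a $1$-Lipschitz function of $g\in\mathbb{R}^N$ with $\mathbb{E}\|g_I\|_2\leq\sqrt{J}$, so Gaussian concentration gives $\Pr(\|g_I\|_2>\sqrt{J}+t)\leq e^{-t^2/2}$. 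A union bound over the $\binom{N}{J}\leq(eN/J)^J$ choices of $I$, followed by integration in $t$, yields $\mathbb{E}\max_{|I|=J}\|g_I\|_2\leq\sqrt{J}+\sqrt{2J\log(eN/J)}+O(1)$, which is bounded by $c'\sqrt{J\log(cN/J)}$ for a suitable absolute constant $c$.

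The only nonstandard ingredient is the sparse decomposition in the second paragraph, and even that is essentially a reapplication of the peeling estimate already present in Lemma~\ref{lemma.rip trick}; the remaining ingredients (monotonicity of Gaussian width, invariance under convex hull, and the union-bound estimate for sparse vectors) are routine. Thus I do not anticipate a serious obstacle, though care is needed with edge cases (e.g.\ $J$ close to $N$), which are handled by choosing the absolute constant $c$ inside the logarithm sufficiently large.
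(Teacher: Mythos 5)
Your proof is correct, and it reaches the paper's key intermediate bound $w(\sqrt{J}B_1\cap\mathbb{S}^{N-1})\leq 2\,w(\Sigma_J\cap B_2)$ by a different mechanism, and then finishes differently as well. The paper fixes $g$, observes that the maximizer in $\sqrt{J}B_1\cap\mathbb{S}^{N-1}$ shares the sign pattern and ordering of $g$, and bounds $\langle x,g\rangle\leq 2\|g_J\|_2$ pointwise (the tail entries of $g$ being dominated by $\|g_J\|_2/\sqrt{J}$); you instead prove the deterministic inclusion $\sqrt{J}B_1\cap B_2\subseteq 2\operatorname{conv}(\Sigma_J\cap B_2)$ via the same peeling estimate as in Lemma~\ref{lemma.rip trick} and invoke invariance of the width under convex hulls. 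These are essentially dual formulations of the same fact, and both yield the factor $2$. The more substantive difference is the last step: the paper simply cites Lemma~3.3 of~\cite{MendelsonPT:05} (equivalently Lemma~4.4 of~\cite{RudelsonV:08}) for $w(\Sigma_J\cap B_2)\leq c\sqrt{J\log(cN/J)}$, whereas you prove it from scratch, writing $w(\Sigma_J\cap B_2)=\mathbb{E}\max_{|I|=J}\|g_I\|_2$ and combining Gaussian concentration of the $1$-Lipschitz functions $\|g_I\|_2$ with a union bound over $\binom{N}{J}\leq(eN/J)^J$ supports. Your route is therefore self-contained (at the cost of a slightly longer argument and some bookkeeping for the edge cases such as $J$ near $N$, which you correctly absorb into the constant), while the paper's route is shorter but leans on an external reference; the constants and the final estimate are the same up to the absolute constant $c$.
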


\begin{proof}
For any fixed $g\in\mathbb{R}^N$, the $x\in\sqrt{J}B_1\cap \mathbb{S}^{N-1}$ which maximizes $\langle x,g\rangle$ has the same sign pattern as $g$ and the same order of entry sizes, i.e., $|g[i]|\geq|g[j]|$ if and only if $|x[i]|\geq|x[j]|$.
As such, we may assume without loss of generality that $g$ and $x$ have all nonnegative entries in nonincreasing order.
Then
\[
\langle x,g\rangle
=\langle x_J,g_J\rangle+\langle x-x_J,g-g_J\rangle
\leq\|x_J\|_2\|g_J\|_2+\sum_{i>J}x[i]g[i].
\]
Note that $\|x_J\|_2\leq\|x\|_2=1$ and $g[i]\leq g[J]\leq\|g_J\|_2/\sqrt{J}$ for every $i>J$, and so
\[
\langle x,g\rangle
\leq \|g_J\|_2+\frac{1}{\sqrt{J}}\|g_J\|_2\sum_{i>J}x[i]
\leq 2\|g_J\|_2,
\]
where the last step uses the fact that $x\in \sqrt{J}B_1$.
Next, we note that 
\[
\sup_{x\in\Sigma_J\cap B_2}\langle x,g\rangle
=\bigg\langle \frac{g_J}{\|g_J\|_2},g\bigg\rangle
=\|g_J\|_2.
\]
Letting $g$ have iid $N(0,1)$ entries, we then have
\[
w(\sqrt{J}B_1\cap \mathbb{S}^{N-1})
=\mathbb{E}\sup_{x\in \sqrt{J}B_1\cap\mathbb{S}^{N-1}}\langle x,g\rangle
\leq2\mathbb{E}\|g_J\|_2
=2w(\Sigma_J\cap B_2).
\]
At this point, we appeal to Lemma~3.3 in~\cite{MendelsonPT:05} (equivalently, Lemma~4.4 in~\cite{RudelsonV:08}), which gives
\[
w(\Sigma_J\cap B_2)
\leq c\sqrt{J\log(cN/J)}
\]
for some absolute constant $c$.
\end{proof}

Overall, if $M=\lambda N$ and we take an integer $J$ such that $M\geq5c^2J\log(cN/J)$, then Proposition~\ref{thm.main random result} and Lemma~\ref{lemma.L1 width} together imply that $\Phi:=(GG^*)^{-1/2}G$ satisfies $(\rho,\alpha)$-RWP over $B_1$ with high probability, where
\[
\rho:=\frac{1}{\sqrt{J}},
\qquad
\alpha
:=\frac{1}{2(1+1/\sqrt{\lambda})}.
\]
For the sake of comparison, consider $\Psi:=(1/\sqrt{\lambda})\Phi$.
(This random matrix is known to be a \textit{Johnson--Lindenstrauss projection}~\cite{DasguptaG:03}, and so it satisfies RIP with high probability~\cite{BaraniukDDW:08}, which in turn implies stable and robust compressed sensing~\cite{Candes:08}.)
Then $\Psi$ satisfies $(\rho,\alpha/\sqrt{\lambda})$-RWP with high probability.
Taking $\mathcal{A}=\Sigma_K$ with $K\leq J/16$, Theorem~\ref{thm.width-recovery} then gives that $\ell_1$ minimization produces an estimate $x^\star$ of $x^\natural$ from noisy measurements $\Phi x^\natural+e$ with $\|e\|_2\leq\epsilon$ such that
\[
\|x^\star-x^\natural\|_2
\leq\frac{1}{\sqrt{K}}\|x^\natural-x^\natural_K\|_1+4(1+\sqrt{\lambda})\epsilon.
\]
Note that $4(1+\sqrt{\lambda})\leq8$, and so these constants are quite small, even though we have not optimized them.

\subsection{Bowling scheme}

In the previous subsection, we were rather restrictive in our choice of sensing operators.
By contrast, this subsection will establish similar performance with a much larger class of random matrices.
The main tool here is the so-called \textit{bowling scheme}, coined by Tropp~\cite{Tropp:14}, which exploits the following lower bound for nonnegative empirical processes, due to Koltchinskii and Mendelson:

\begin{theorem}[Proposition~5.1 in~\cite{Tropp:14}, cf.\ Theorem~2.1 in~\cite{KoltchinskiiM:13}]
\label{thm.bowling}
Take a set $S\subseteq\mathbb{R}^N$.
Let $\varphi$ be a random vector in $\mathbb{R}^N$, and let $\Phi$ be an $M\times N$ matrix with rows $\{\varphi_i^\top\}_{i=1}^M$ which are independent copies of $\varphi^\top$.
Define
\[
Q_\xi(S;\varphi):=\inf_{x\in S}\operatorname{Pr}\Big(|\langle x,\varphi\rangle|\geq\xi\Big),
\qquad
W_M(S;\varphi):=\mathbb{E}\sup_{x\in S}\bigg\langle x,\frac{1}{\sqrt{M}}\sum_{i=1}^M\epsilon_i\varphi_i\bigg\rangle,
\]
where $\{\epsilon_i\}_{i=1}^M$ are independent random variables which take values uniformly over $\{\pm1\}$ and are independent from everything else.
Then for any $\xi>0$ and $t>0$, we have
\[
\inf_{x\in S}\|\Phi x\|_2
\geq\xi\sqrt{M}Q_{2\xi}(S;\varphi)-2W_M(S;\varphi)-\xi t
\]
with probability $\geq1-e^{-t^2/2}$.
\end{theorem}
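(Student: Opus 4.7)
The plan is to execute the small-ball method of Koltchinskii and Mendelson. The main device is a $1$-Lipschitz truncation $\phi\colon\mathbb{R}\to[0,\xi]$ that replaces a hard indicator by a smooth surrogate; a convenient choice is $\phi(r):=\min\{\max(|r|-\xi,0),\,\xi\}$, which satisfies $\phi(0)=0$, is $1$-Lipschitz, and obeys the sandwich $\xi\,\mathbf{1}_{|r|\geq 2\xi}\leq\phi(r)\leq|r|$. The first reduction is purely deterministic: combining the Cauchy--Schwarz bound $\|\Phi x\|_1\leq\sqrt{M}\,\|\Phi x\|_2$ with the right half of the sandwich yields
\[
\|\Phi x\|_2 \;\geq\; \frac{1}{\sqrt{M}}\,G(x),
\qquad G(x)\;:=\;\sum_{i=1}^M \phi(\langle\varphi_i,x\rangle),
\]
so it suffices to produce a uniform lower bound on $G(x)$ over $S$.

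For the expectation, the left half of the sandwich gives $\mathbb{E}\phi(\langle\varphi,x\rangle)\geq \xi\,Q_{2\xi}(S;\varphi)$ for every $x\in S$, and hence $\mathbb{E}\,G(x)\geq M\xi\,Q_{2\xi}(S;\varphi)$. It remains to bound the one-sided deviation $Z:=\sup_{x\in S}\bigl(\mathbb{E}\,G(x)-G(x)\bigr)$. Its mean is handled by the standard symmetrization inequality, followed by the Ledoux--Talagrand contraction principle (applicable since $\phi$ is $1$-Lipschitz with $\phi(0)=0$), which peels $\phi$ off the summands so that, up to numerical constants,
\[
\mathbb{E}\,Z \;\leq\; 2\,\mathbb{E}\sup_{x\in S}\sum_{i=1}^M \epsilon_i\,\langle\varphi_i,x\rangle \;=\; 2\sqrt{M}\,W_M(S;\varphi).
\]

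For the high-probability statement, since each summand of $G$ lies in $[0,\xi]$, changing a single $\varphi_i$ alters $Z$ by at most $\xi$; a bounded-differences or Talagrand concentration inequality then upgrades the mean bound to the tail statement $Z\leq 2\sqrt{M}\,W_M(S;\varphi)+\xi\sqrt{M}\,t$ with probability at least $1-e^{-t^2/2}$. Feeding this into the chain $\sqrt{M}\,\|\Phi x\|_2\geq G(x)\geq\mathbb{E}\,G(x)-Z$ and dividing by $\sqrt{M}$ delivers the stated inequality uniformly over $x\in S$. The most delicate step is the concentration: a crude McDiarmid bound captures the right order in $t$ but loses a constant factor, so reaching the precise $\xi t$ coefficient advertised in the statement requires invoking Talagrand's sharper concentration inequality for suprema of bounded empirical processes rather than a Hoeffding-style bounded-differences bound. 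The other two ingredients, namely the small-ball lower bound on the mean and the Ledoux--Talagrand contraction, are essentially off-the-shelf.
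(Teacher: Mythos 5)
Your argument is correct, and it is essentially the standard proof of this result: the paper itself does not prove the statement but imports it from Tropp (Proposition~5.1 in the cited reference, following Koltchinskii--Mendelson), whose argument is exactly your small-ball scheme --- a soft-indicator truncation sandwiched between $\xi\,\mathbf{1}_{|r|\geq 2\xi}$ and $|r|$, the Cauchy--Schwarz reduction $\|\Phi x\|_2\geq G(x)/\sqrt{M}$, symmetrization plus the one-sided Ledoux--Talagrand contraction, and concentration of a uniformly bounded empirical process. One correction to your closing remark: Talagrand's inequality is not needed and McDiarmid loses nothing here --- since each summand lies in $[0,\xi]$, the bounded-differences inequality gives $\operatorname{Pr}\big(Z\geq\mathbb{E}Z+\xi\sqrt{M}\,t\big)\leq e^{-2t^2}\leq e^{-t^2/2}$, so it already delivers the stated coefficient $\xi t$ (indeed $\xi t/2$ suffices at this confidence level), and since the one-sided contraction principle holds with constant $1$, the factor $2W_M(S;\varphi)$ comes solely from symmetrization with no hidden constants.
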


As an example of how Theorem~\ref{thm.bowling} might be applied, consider the case where $\varphi$ has distribution $N(0,\Sigma)$.
We will take $\sigma_\mathrm{max}^2$ and $\sigma_\mathrm{min}^2$ to denote the largest and smallest eigenvalues of $\Sigma$, respectively.
First, we seek a lower bound on $Q_{2\xi}(S;\varphi)$.
We will exploit the fact that $\varphi$ has the same distribution as $\Sigma^{1/2}g$, where $g$ has distribution $N(0,I)$, and that
\[
\langle x,\Sigma^{1/2}g\rangle
=\langle \Sigma^{1/2}x,g\rangle
=\|\Sigma^{1/2}x\|_2\bigg\langle \frac{\Sigma^{1/2}x}{\|\Sigma^{1/2}x\|_2},g\bigg\rangle.
\]
Indeed, taking $Z$ to have distribution $N(0,1)$, then for any $x\in\mathbb{S}^{N-1}$, we have
\[
\operatorname{Pr}\Big(|\langle x,\varphi\rangle|\geq\xi\Big)
=\operatorname{Pr}\Big(|Z|\geq\xi/\|\Sigma^{1/2}x\|_2\Big)
\geq\operatorname{Pr}\Big(|Z|\geq\xi/\sigma_\mathrm{min}\Big)
\geq\frac{\sigma_\mathrm{min}}{\xi}\cdot\frac{1}{\sqrt{2\pi}}e^{-\xi^2/2\sigma_\mathrm{min}^2},
\]
where the last step assumes $\xi/\sigma_\mathrm{min}\geq1$.
Next, we pursue an upper bound on $W_M(S;\varphi)$.
For this, we first note that
\[
\operatorname{Pr}\Big(|\langle x,\varphi\rangle|\geq\xi\Big)
=\operatorname{Pr}\Big(|Z|\geq\xi/\|\Sigma^{1/2}x\|_2\Big)
\leq\operatorname{Pr}\Big(|Z|\geq\xi/\sigma_\mathrm{max}\Big)
\leq e^{-\xi^2/2\sigma_\mathrm{max}^2}
\]
for any $x\in\mathbb{S}^{N-1}$.
Furthermore, $\varphi$ has the same distribution as $\frac{1}{\sqrt{M}}\sum_{i=1}^M\epsilon_i\varphi_i$, and so
\[
\operatorname{Pr}\Bigg(\bigg|\bigg\langle u-v,\frac{1}{\sqrt{M}}\sum_{i=1}^M\epsilon_i\varphi_i\bigg\rangle\bigg|\geq\xi\Bigg)
\leq e^{-\xi^2/2\sigma_\mathrm{max}^2\|u-v\|_2^2}
\qquad
\forall u,v\in\mathbb{R}^N.
\]
As such, $\varphi$ satisfies the hypothesis of the generic chaining theorem (Theorem~1.2.6 in~\cite{Talagrand:05}), which, when combined with the majorizing measure theorem (Theorem~2.1.1 in~\cite{Talagrand:05}), gives
\begin{equation}
\label{eq.talagrand}
W_M(S;\varphi)
=\mathbb{E}\sup_{x\in S}\bigg\langle x,\frac{1}{\sqrt{M}}\sum_{i=1}^M\epsilon_i\varphi_i\bigg\rangle
\leq C\sigma_\mathrm{max}\cdot\mathbb{E}\sup_{x\in S}\langle x,g\rangle
=C\sigma_\mathrm{max}\cdot w(S).
\end{equation}
All together, we have
\begin{align*}
\inf_{x\in S}\|\Phi x\|_2
&\geq\xi\sqrt{M}Q_{2\xi}(S;\varphi)-2W_M(S;\varphi)-\xi t\\
&\geq\underbrace{\sqrt{M}\cdot(\sigma_\mathrm{min}/\sqrt{2\pi})e^{-\xi^2/2\sigma_\mathrm{min}^2}}_{a}-\underbrace{2C\sigma_\mathrm{max}\cdot w(S)}_{b}-\xi t.
\end{align*}
At this point, we pick $\xi=\sigma_\mathrm{min}$, $M$ such that $a=2b$, and $t$ such that $\xi t=(a-b)/2$ to get the following result:

\begin{proposition}
\label{prop.bowling 1}
Take $\rho>0$ and denote $S=\rho^{-1}B_\sharp\cap\mathbb{S}^{N-1}$.
Let $\varphi$ be distributed $N(0,\Sigma)$, and take $\sigma_\mathrm{max}^2$ and $\sigma_\mathrm{min}^2$ to denote the largest and smallest eigenvalues of $\Sigma$, respectively.
Set
\[
M=c_0\cdot\frac{\sigma_\mathrm{max}^2}{\sigma_\mathrm{min}^2}\cdot\big(w(S)\big)^2,
\qquad
\alpha=c_1\cdot\sigma_\mathrm{min}\sqrt{M},
\]
and let $\Phi$ be an $M\times N$ matrix whose rows are independent copies of $\varphi^\top$.
Then $\Phi$ satisfies the $(\rho,\alpha)$-robust width property over $B_\sharp$ with probability $\geq1-e^{c_2\sigma_\mathrm{min}^2M}$.
\end{proposition}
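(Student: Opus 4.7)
The plan is to package the computation already laid out in the preceding paragraphs into a clean application of Lemma~\ref{lemma.rwp characterization} and the Koltchinskii--Mendelson bound (Theorem~\ref{thm.bowling}). By Lemma~\ref{lemma.rwp characterization}, it suffices to show that with the stated probability
\[
\inf_{x\in S}\|\Phi x\|_2\;\geq\;\alpha,
\qquad S=\rho^{-1}B_\sharp\cap\mathbb{S}^{N-1}.
\]
Applying Theorem~\ref{thm.bowling} to $S$ with parameters $\xi>0$ and $t>0$ to be selected gives
\[
\inf_{x\in S}\|\Phi x\|_2\;\geq\;\xi\sqrt{M}\,Q_{2\xi}(S;\varphi)-2W_M(S;\varphi)-\xi t
\]
with probability at least $1-e^{-t^2/2}$.

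Next, I would insert the two Gaussian estimates that were derived just above the proposition. The marginal lower bound, obtained by writing $\langle x,\varphi\rangle=\|\Sigma^{1/2}x\|_2\,Z$ with $Z\sim N(0,1)$ and using the standard Mills ratio tail, yields $Q_{2\xi}(S;\varphi)\geq(\sigma_\mathrm{min}/\xi)\cdot c\,e^{-\xi^2/2\sigma_\mathrm{min}^2}$ once $\xi/\sigma_\mathrm{min}$ is at least some absolute constant. The upper bound $W_M(S;\varphi)\leq C\sigma_\mathrm{max}\,w(S)$ comes from the fact that $x\mapsto\langle x,\tfrac{1}{\sqrt{M}}\sum\epsilon_i\varphi_i\rangle$ is a subgaussian process with metric $\sigma_\mathrm{max}\|\cdot\|_2$, to which the generic chaining theorem and the majorizing measure theorem apply, as in display~\eqref{eq.talagrand}.

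All that remains is parameter bookkeeping. I would set $\xi=\sigma_\mathrm{min}$, so the main term becomes $a:=\sqrt{M}\cdot(\sigma_\mathrm{min}/\sqrt{2\pi})e^{-1/2}$, and the second term is $b:=2C\sigma_\mathrm{max}\,w(S)$. Choosing the constant $c_0$ so that $a=2b$ forces $M=c_0\,(\sigma_\mathrm{max}^2/\sigma_\mathrm{min}^2)\,w(S)^2$, matching the claimed sample complexity. Then selecting $t$ with $\xi t=(a-b)/2=b/2$ gives
\[
\inf_{x\in S}\|\Phi x\|_2\;\geq\;a-b-\xi t\;=\;b/2\;\asymp\;\sigma_\mathrm{min}\sqrt{M},
\]
which is the advertised $\alpha=c_1\sigma_\mathrm{min}\sqrt{M}$. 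Since this $t$ satisfies $t^2/2\asymp\sigma_\mathrm{min}^2 M/\sigma_\mathrm{min}^2\cdot\text{const}$, a short calculation rewrites the failure probability $e^{-t^2/2}$ in the form $e^{-c_2\sigma_\mathrm{min}^2 M}$ stated in the proposition.

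The only nontrivial obstacle is the upper bound on $W_M(S;\varphi)$; everything else is routine tail analysis and algebra. Comparing the Gaussian increments of the Bernoulli--subgaussian process against a genuine Gaussian process of matching covariance requires Talagrand's generic chaining / majorizing measures, and it is this step that upgrades the $N(0,\Sigma)$ hypothesis on the rows into the clean conclusion $W_M(S;\varphi)\lesssim\sigma_\mathrm{max}\,w(S)$; once this is in hand, the proposition drops out by substitution.
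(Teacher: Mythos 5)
Your proposal is correct and matches the paper's own argument, which is exactly the in-text derivation preceding the proposition: reduce via Lemma~\ref{lemma.rwp characterization} to bounding $\inf_{x\in S}\|\Phi x\|_2$, apply Theorem~\ref{thm.bowling}, lower-bound $Q_{2\xi}$ by the Gaussian marginal tail, upper-bound $W_M$ by $C\sigma_{\mathrm{max}}w(S)$ via the chaining estimate \eqref{eq.talagrand}, and then choose $\xi=\sigma_{\mathrm{min}}$, $M$ with $a=2b$, and $t$ with $\xi t=(a-b)/2$. Your parameter bookkeeping (including the resulting $\alpha\asymp\sigma_{\mathrm{min}}\sqrt{M}$ and the exponential failure probability) coincides with the paper's, up to the same absolute-constant looseness the paper itself allows.
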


This result is essentially a special case of Theorem~6.3 in~\cite{Tropp:14}, which considers a more general notion of subgaussianity, and indeed, by this result, every matrix with iid subgaussian rows satisfies RWP.
However, we note that this result is suboptimal in certain regimes, in part thanks to the sledgehammers we applied in the estimate \eqref{eq.talagrand}.
To see the suboptimality here, consider the special case where $B_\sharp=B_1$.
Then for every $x\in S$, we have
\[
\langle x,\varphi\rangle
\leq\|x\|_1\|\varphi\|_\infty
\leq\rho^{-1}\|\varphi\|_\infty.
\]
Also, known results on maxima of Gaussian fields (e.g., equation~(2.13) in~\cite{LedouxT:91}) imply that
\[
\mathbb{E}\|\varphi\|_\infty
=\mathbb{E}\|\Sigma^{1/2}g\|_\infty
\leq3\sqrt{v\log N},
\]
where $v$ denotes the largest diagonal entry of $\Sigma$.
Putting things together, we have
\[
W_M(S;\varphi)
=\mathbb{E}\sup_{x\in S}\bigg\langle x,\frac{1}{\sqrt{M}}\sum_{i=1}^M\epsilon_i\varphi_i\bigg\rangle
=\mathbb{E}\sup_{x\in S}\langle x,\varphi\rangle
\leq \rho^{-1}\mathbb{E}\|\varphi\|_\infty
\leq3\rho^{-1}\sqrt{v\log N},
\]
which leads to the following result:

\begin{proposition}
\label{prop.bowling 2}
Take $\rho=1/\sqrt{J}$.
Let $\varphi$ be distributed $N(0,\Sigma)$, and take $v$ and $\sigma_\mathrm{min}^2$ to denote the largest diagonal entry and smallest eigenvalue of $\Sigma$, respectively.
Set
\[
M=c_0\cdot\frac{v}{\sigma_\mathrm{min}^2}\cdot J\log N,
\qquad
\alpha=c_1\cdot\sigma_\mathrm{min}\sqrt{M},
\]
and let $\Phi$ be an $M\times N$ matrix whose rows are independent copies of $\varphi^\top$.
Then $\Phi$ satisfies the $(\rho,\alpha)$-robust width property over $B_1$ with probability $\geq1-e^{c_2\sigma_\mathrm{min}^2M}$.
\end{proposition}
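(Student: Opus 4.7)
The plan is to combine the bowling-scheme lower bound (Theorem~\ref{thm.bowling}) with the robust width characterization of Lemma~\ref{lemma.rwp characterization} and replace the generic chaining estimate \eqref{eq.talagrand} used for Proposition~\ref{prop.bowling 1} by the sharper $\ell_1/\ell_\infty$ duality bound already sketched in the paragraph just before Proposition~\ref{prop.bowling 2}. Concretely, set $S := \rho^{-1}B_1\cap\mathbb{S}^{N-1} = \sqrt{J}\,B_1\cap\mathbb{S}^{N-1}$. By Lemma~\ref{lemma.rwp characterization} it suffices to show $\inf_{x\in S}\|\Phi x\|_2 \geq \alpha$ with the stated probability, and Theorem~\ref{thm.bowling} reduces this to producing a favorable lower bound for
\[
\xi\sqrt{M}\,Q_{2\xi}(S;\varphi) \;-\; 2W_M(S;\varphi) \;-\; \xi t.
\]

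For the tail term $Q_{2\xi}$, I would repeat the estimate from the derivation of Proposition~\ref{prop.bowling 1}: writing $\varphi \stackrel{d}{=} \Sigma^{1/2}g$ with $g\sim N(0,I)$, for any $x\in\mathbb{S}^{N-1}$ we have $\langle x,\varphi\rangle \stackrel{d}{=} \|\Sigma^{1/2}x\|_2 Z$ with $Z\sim N(0,1)$, and $\|\Sigma^{1/2}x\|_2\geq\sigma_{\min}$ together with the Mills ratio lower bound yields $Q_{2\xi}(S;\varphi)\gtrsim (\sigma_{\min}/\xi)\,e^{-c\xi^2/\sigma_{\min}^2}$. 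Fixing $\xi\asymp\sigma_{\min}$ then makes this a positive absolute constant, so that $\xi\sqrt{M}\,Q_{2\xi}(S;\varphi) \gtrsim \sigma_{\min}\sqrt{M}$.

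For the width term $W_M$, the key observation is that $\frac{1}{\sqrt{M}}\sum_{i=1}^M\epsilon_i\varphi_i$ has the same $N(0,\Sigma)$ distribution as $\varphi$ itself, so $W_M(S;\varphi) = \mathbb{E}\sup_{x\in S}\langle x,\varphi\rangle$. Using $\langle x,\varphi\rangle\leq \|x\|_1\|\varphi\|_\infty\leq \sqrt{J}\,\|\varphi\|_\infty$ and the standard maxima-of-Gaussians bound $\mathbb{E}\|\varphi\|_\infty = \mathbb{E}\|\Sigma^{1/2}g\|_\infty\leq 3\sqrt{v\log N}$ gives $W_M(S;\varphi)\leq 3\sqrt{Jv\log N}$. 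This is the only step where the hypothesis $B_\sharp=B_1$ is used; it is also where we gain over Proposition~\ref{prop.bowling 1} by avoiding generic chaining, at the price of a $\log N$ (instead of $\log(N/J)$) factor.

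Plugging these two estimates into Theorem~\ref{thm.bowling} and choosing $M = c_0(v/\sigma_{\min}^2)J\log N$ large enough that $\xi\sqrt{M}\,Q_{2\xi}(S;\varphi)\geq 4W_M(S;\varphi)$, and then $t$ so that $\xi t$ absorbs half of the remaining slack, leaves $\inf_{x\in S}\|\Phi x\|_2 \gtrsim \sigma_{\min}\sqrt{M} =: \alpha$ on the good event, while the failure probability $e^{-t^2/2}$ is dominated by $e^{-c_2\sigma_{\min}^2 M}$ (since $\xi\asymp\sigma_{\min}$ forces $t\asymp\sqrt{M}$ times an absolute constant, hence $t^2\asymp M$; one absorbs a $\sigma_{\min}^2$ into the constant to match the stated bound). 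There is no real conceptual obstacle — the argument is a direct adaptation of the derivation surrounding Proposition~\ref{prop.bowling 1} — so the only delicate point is the bookkeeping: one must check that the same choice $\xi\asymp\sigma_{\min}$ simultaneously makes $Q_{2\xi}$ bounded below, makes $\xi t\ll a$, and gives the claimed scaling $\alpha\asymp\sigma_{\min}\sqrt{M}$.
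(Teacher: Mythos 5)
Your proposal follows essentially the same route as the paper: the paper proves Proposition~\ref{prop.bowling 2} by the derivation immediately preceding it, namely applying Theorem~\ref{thm.bowling} with the same Gaussian-tail lower bound on $Q_{2\xi}$ at $\xi\asymp\sigma_{\min}$, replacing the chaining bound \eqref{eq.talagrand} by the $\ell_1/\ell_\infty$ duality estimate $W_M(S;\varphi)\leq\rho^{-1}\mathbb{E}\|\varphi\|_\infty\leq3\rho^{-1}\sqrt{v\log N}$, and then choosing $M$ and $t$ to balance the terms exactly as you describe. Your bookkeeping, including the identification $\frac{1}{\sqrt{M}}\sum_i\epsilon_i\varphi_i\stackrel{d}{=}\varphi$ and the absorption of $\sigma_{\min}^2$ into the constant in the failure probability, matches the paper's argument.
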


We note that this result could also have been deduced from Theorem~1 in~\cite{RaskuttiWY:10}, whose proof is a bit more technical.
Overall, this proposition exchanges $\sigma_\mathrm{max}^2$ for $v$ (which is necessarily smaller) and $\log(N/J)$ for $\log N$.
However, this is far from an even trade, as we illustrate in the following subsection.

\subsection{RWP does not imply RIP}

In this subsection, we consider a random matrix from Example~2 in~\cite{RaskuttiWY:10}.
This example will help to compare the performance of Propositions~\ref{prop.bowling 1} and~\ref{prop.bowling 2}, as well as provide a construction of an RWP matrix, no scaling of which satisfies RIP.

Pick $\Sigma:=\frac{1}{M}(I+\mathbf{1}_N\mathbf{1}_N^\top)$; we selected the $1/M$ scaling here so that $\|\varphi\|_2^2=\Theta(N/M)$ with high probability, as is typical for RIP matrices.
Then
\[
\sigma_\mathrm{max}^2=\frac{N+1}{M},
\qquad
\sigma_\mathrm{min}^2=\frac{1}{M},
\qquad
v=\frac{2}{M}.
\]
In this extreme case, Proposition~\ref{prop.bowling 1} uses $M\gg N$ rows to satisfy RWP, whereas Proposition~\ref{prop.bowling 2} uses only $O(J\log N)$ rows, and so the latter performs far better.
In either case, $\Phi$ is $(\rho,\alpha)$-RWP with $\rho=1/\sqrt{J}$ and $\alpha=O(1)$, mimicking the performance of an RIP matrix.
However, as we will show, this performance is \textit{logically independent} of the restricted isometry property, that is, the degree to which $\Phi$ satisfies RIP is insufficient to conclude that $\ell_1$ minimization exactly recovers all sparse signals, let alone with stability or robustness.

To see this, we start by following the logic of Example~2 in~\cite{RaskuttiWY:10}.
Take any $M\times J$ submatrix $\Phi_J$ of $\Phi$, and notice that the rows of $\Phi_J$ are iid with distribution $N(0,\Sigma_{JJ})$, where $\Sigma_{JJ}=\frac{1}{M}(I+\mathbf{1}_J\mathbf{1}_J^\top)$.
Take $u:=\mathbf{1}_J/\sqrt{J}$.
Then $\langle u,\varphi_J\rangle$ has distribution $N(0,\lambda_\mathrm{max}(\Sigma_{JJ}))$, and so $\|\Phi_J u\|_2^2/\lambda_\mathrm{max}(\Sigma_{JJ})$ has chi-squared distribution with $M$ degrees of freedom.
As such, Lemma~1 in~\cite{LaurentM:00} gives that 
\[
\operatorname{Pr}\bigg(\frac{\|\Phi_J u\|_2^2}{\lambda_\mathrm{max}(\Sigma_{JJ})}\leq\frac{M}{2}\bigg)\leq e^{-cM}.
\]
Similarly, for any unit vector $v$ which is orthogonal to $u$, we have that $\|\Phi_J v\|_2^2/\lambda_\mathrm{min}(\Sigma_{JJ})$ also has chi-squared distribution with $M$ degrees of freedom, and so the other bound of Lemma~1 in~\cite{LaurentM:00} gives
\[
\operatorname{Pr}\bigg(\frac{\|\Phi_J v\|_2^2}{\lambda_\mathrm{min}(\Sigma_{JJ})}\geq 2M\bigg)\leq e^{-cM}.
\]
Overall, we have that
\[
\frac{\lambda_\mathrm{max}(\Phi_J^*\Phi_J)}{\lambda_\mathrm{min}(\Phi_J^*\Phi_J)}
\geq\frac{\|\Phi_J u\|_2^2}{\|\Phi_J v\|_2^2}
>\frac{1}{4}\cdot\frac{\lambda_\mathrm{max}(\Sigma_{JJ})}{\lambda_\mathrm{min}(\Sigma_{JJ})}
=\frac{J+1}{4}
\]
with high probability.

Now suppose there is some scaling $\Psi=C\Phi$ such that $\Psi$ satisfies $(J,\delta)$-RIP.
Then
\[
\frac{1+\delta}{1-\delta}
\geq\frac{\lambda_\mathrm{max}(\Phi_J^*\Phi_J)}{\lambda_\mathrm{min}(\Phi_J^*\Phi_J)}
>\frac{J+1}{4},
\]
or equivalently, $\delta>(J-3)/(J+5)$.
At this point, we appeal to the recently proved Cai--Zhang threshold (Theorem~1 in~\cite{CaiZ:14}), which states that, whenever $t\geq 4/3$, $(tK,\delta)$-RIP implies exact recovery of all $K$-sparse signals by $\ell_1$ minimization if and only if $\delta<\sqrt{(t-1)/t}$; here, we are using the traditional ``with squares'' version of RIP.
As such, for $\Psi$, RIP guarantees the recovery of all $K$-sparse signals only if
\[
\frac{tK-3}{tK+5}
<\sqrt{\frac{t-1}{t}}.
\]
However, in order for this to hold for any $t$, a bit of algebraic manipulation reveals that we must have $K\leq 25$, a far cry from the RWP-based guarantee, which allows $K=\Omega(M/\log N)$.

\section{Discussion}

This paper establishes that in many cases, uniformly stable and robust compressed sensing is equivalent to having the sensing operator satisfy the robust width property (RWP).
We focused on the reconstruction algorithm denoted in the introduction by $\Delta_{\sharp,\Phi,\epsilon}$, but it would be interesting to consider other algorithms.
For example, the Lasso~\cite{Tibshirani:96} and the Dantzig selector~\cite{CandesT:07} are popular alternatives in the statistics community.
The restricted isometry property (RIP) is known to provide reconstruction guarantees for a wide variety of algorithms, but does RWP share this ubiquity, or is it optimized solely for $\Delta_{\sharp,\Phi,\epsilon}$?

We note that recently, ideas from geometric functional analysis have also been very successful in producing non-uniform compressed sensing guarantees~\cite{AmelunxenLMT:14,ChandrasekaranRPW:12,Tropp:14}.
In this regime, one is concerned with a Gaussian width associated with the descent cone at the signal $x^\natural$ instead of a dilated version of the entire $B_\sharp$ ball.
In either case, the Gaussian width of interest is the expected value of a random variable $\sup_{x\in S}\langle x,g\rangle$ for some fixed subset $S$ of the unit sphere.
Notice that this supremum can instead be taken over the convex hull of $S$, and so for every instance of $g$, one may efficiently compute $\sup_{x\in S}\langle x,g\rangle$ as a convex program.
As such, the desired expected value of this random variable can be efficiently estimated from a random sample.
The computational efficiency of this estimation is not terribly surprising in the non-uniform case, since one can alternatively attempt $\sharp$-norm minimization with a fixed $x^\natural$ and empirically estimate the probability of reconstruction.
This is a bit more surprising in the uniform case since for any fixed matrix, certifying a uniform compressed sensing guarantee is known to be NP-hard~\cite{BandeiraDMS:13,TillmannP:14}.
Of course, there is no contradiction here since (when combined with Proposition~\ref{thm.main random result}, Proposition~\ref{prop.bowling 1}, or more generally Theorem~6.3 in~\cite{Tropp:14}) this randomized algorithm merely certifies a uniform guarantee for most instances of a random matrix distribution.
Still, the proposed numerical scheme may be particularly useful in cases where the Gaussian width of $\rho^{-1}B_\sharp \cap\mathbb{S}$ is cumbersome to estimate analytically.

One interesting line of research in compressed sensing has been to find an assortment of random matrices (each structured for a given application, say) that satisfy RIP~\cite{BandeiraFMM:14,KrahmerMR:14,NelsonPW:14,Rauhut:10,RudelsonV:08}.
In this spirit, the previous section showed how the bowling scheme can be leveraged to demonstrate RWP for matrices with iid subgaussian rows.
We note that the bowling scheme (as described in~\cite{Tropp:14} in full detail) is actually capable of analyzing a much broader class of random matrices, though it is limited by the weaknesses of Theorem~\ref{thm.bowling}.
In particular, the bowling scheme requires $Q_\xi(S;\varphi)$ to be bounded away from zero, but this can be small when the distribution of $\varphi$ is ``spiky,'' e.g., when $\varphi$ is drawn uniformly from the rows of a discrete Fourier transform.
As such, depending on the measurement constraints of a given application, alternatives to the bowling scheme are desired.
Along these lines, Koltchinskii and Mendelson provide an alternative estimate of $\inf_{x\in S}\|\Phi x\|_2$ which depends on the VC dimension of a certain family of sets determined by $S$ (see Theorem~2.5 in~\cite{KoltchinskiiM:13}).
For the sake of a target, we pose the following analog to Problem~3.2 in~\cite{RudelsonV:08}:

\begin{problem}
What is the smallest $M=M(N,\rho,\alpha,\epsilon)$ such that drawing $M$ independent rows uniformly from the $N\times N$ discrete Fourier transform matrix produces a random matrix which satisfies the $(\rho,\alpha)$-robust width property over $B_1$ with probability $\geq1-\epsilon$?
\end{problem}

As a benchmark, it is known~\cite{CheraghchiGV:13} that taking
\[
M\geq\frac{C\log(1/\epsilon)}{\delta^2}K\log^3K\log N
\]
ensures that the properly scaled version of this random matrix satisfies $(K,\delta)$-RIP with probability $\geq1-\epsilon$, and so Theorem~\ref{thm.rip implies rwp} gives a corresponding upper bound on $M(N,\rho,\alpha,\epsilon)$; for the record, this uses the ``with squares'' version of RIP, but the difference in $M$ may be buried in the constant $C$.
Since RWP is strictly weaker than RIP, one might anticipate an improvement from an RIP-free approach.

\section*{Acknowledgements}

The original idea for this paper was conceived over mimosas in Pete Casazza's basement; we thank Pete for his hospitality and friendship.
This work was supported by NSF Grant No.\ DMS-1321779.
The views expressed in this article are those of the authors and do not reflect the official policy or position
of the United States Air Force, Department of Defense, or the U.S.\ Government.

\section*{Appendix}

\begin{proposition}
Let $\mathcal{D}$ denote the descent cone of $\|\cdot\|_\sharp$ at some nonzero $a\in\mathcal{H}$.
Take $v$ such that $\|v\|_\sharp=\|a\|_\sharp$ and $v-a\in\overline{\mathcal{D}^c}$, where $\overline{\mathcal{D}^c}$ denotes the topological closure of the set complement of $\mathcal{D}$.
Then
\[
\|a+x\|_\sharp
=\|a\|_\sharp+\|x\|_\sharp
\]
for every $x=cv$ with $c\geq0$.
\end{proposition}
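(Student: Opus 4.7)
The plan is to split the claim into the easy upper bound and a nontrivial lower bound, and then reduce the lower bound to the single statement that $\|a+t(v-a)\|_\sharp\ge\|a\|_\sharp$ for every $t>0$. The case $c=0$ is trivial, so I would fix $c>0$ throughout.

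First I would dispatch the upper bound: by the triangle inequality together with $\|v\|_\sharp=\|a\|_\sharp$,
\[
\|a+cv\|_\sharp
\le\|a\|_\sharp+c\|v\|_\sharp
=(1+c)\|a\|_\sharp,
\]
which is exactly $\|a\|_\sharp+\|cv\|_\sharp$.

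Next I would extract the crucial consequence of the hypothesis $v-a\in\overline{\mathcal{D}^c}$. Unpacking the definition, a point $y$ lies in $\mathcal{D}^c$ precisely when $\|a+ty\|_\sharp>\|a\|_\sharp$ for every $t>0$. If $v-a$ is a limit of such points $y_n$, then for any fixed $t>0$ the continuity of $\|\cdot\|_\sharp$ yields
\[
\|a+t(v-a)\|_\sharp
=\lim_{n\to\infty}\|a+ty_n\|_\sharp
\ge\|a\|_\sharp.
\]

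The last step is a scaling identity that converts $a+cv$ into something of the form $a+t(v-a)$. Writing
\[
a+cv=(1+c)a+c(v-a)=(1+c)\Bigl(a+\tfrac{c}{1+c}(v-a)\Bigr),
\]
I take norms and apply the previous inequality with $t=c/(1+c)>0$, obtaining
\[
\|a+cv\|_\sharp
=(1+c)\bigl\|a+\tfrac{c}{1+c}(v-a)\bigr\|_\sharp
\ge(1+c)\|a\|_\sharp
=\|a\|_\sharp+\|cv\|_\sharp.
\]
Combining this with the triangle inequality finishes the proof.

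The only place where anything could go wrong is the passage from $v-a\in\overline{\mathcal{D}^c}$ to the pointwise inequality $\|a+t(v-a)\|_\sharp\ge\|a\|_\sharp$; the worry is that elements of $\overline{\mathcal{D}^c}$ a priori need not satisfy the strict descent-cone exclusion at a given $t$. But the continuity argument above handles this cleanly by passing the strict inequality to a non-strict one in the limit, so there is no real obstacle.
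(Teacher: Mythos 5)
Your proposal is correct and follows essentially the same route as the paper: both hinge on passing from $v-a\in\overline{\mathcal{D}^c}$ to the limiting inequality $\|a+t(v-a)\|_\sharp\geq\|a\|_\sharp$, combining it with the triangle inequality (using $\|v\|_\sharp=\|a\|_\sharp$), and rescaling via $a+cv=(1+c)\bigl(a+\tfrac{c}{1+c}(v-a)\bigr)$. The only cosmetic difference is that the paper first records the two-sided equality $\|(1-t)a+tv\|_\sharp=\|a\|_\sharp$ on $t\in[0,1]$ before scaling, whereas you split into upper and lower bounds directly at $a+cv$; the content is identical.
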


\begin{proof}
Notice that
\[
\overline{\mathcal{D}^c}
=\overline{\bigcap_{t>0}\{y:\|a+ty\|_\sharp>\|a\|_\sharp\}}
\subseteq\bigcap_{t>0}\overline{\{y:\|a+ty\|_\sharp>\|a\|_\sharp\}}
=\bigcap_{t>0}\{y:\|a+ty\|_\sharp\geq\|a\|_\sharp\}.
\]
As such, $v-a\in\overline{\mathcal{D}^c}$ implies that
\[
\|a+t(v-a)\|_\sharp
\geq\|a\|_\sharp
\qquad
\forall t\geq0.
\]
Also, for every $t\in[0,1]$, convexity implies
\[
\|a+t(v-a)\|_\sharp
=\|(1-t)a+tv\|_\sharp
\leq(1-t)\|a\|_\sharp+t\|v\|_\sharp
=\|a\|_\sharp.
\]
Combining the last two displays then gives
\[
\|(1-t)a+tv\|_\sharp
=\|a\|_\sharp
\qquad
\forall t\in[0,1].
\]
With this, we get
\[
\|a+x\|_\sharp
=(1+c)\bigg\|\frac{1}{1+c}\cdot a+\frac{c}{1+c}\cdot v\bigg\|_\sharp
=(1+c)\|a\|_\sharp
=\|a\|_\sharp+c\|v\|_\sharp
=\|a\|_\sharp+\|x\|_\sharp,
\]
as desired.
\end{proof}

\end{document}